\documentclass[twoside,leqno,twocolumn]{article}

\usepackage[letterpaper]{geometry}

\usepackage{siamproceedings}

\usepackage[T1]{fontenc}
\usepackage{amsfonts}
\usepackage{graphicx}
\usepackage{epstopdf}
\usepackage{enumitem}
\usepackage{algorithmic}
\ifpdf
  \DeclareGraphicsExtensions{.eps,.pdf,.png,.jpg}
\else
  \DeclareGraphicsExtensions{.eps}
\fi

\newsiamremark{remark}{Remark}
\newsiamremark{hypothesis}{Hypothesis}
\Crefname{hypothesis}{Hypothesis}{Hypotheses}
\newsiamthm{claim}{Claim}

\usepackage{amsopn}

\usepackage{xspace}
\usepackage{float}
\usepackage{cite}
\usepackage{subcaption}
\usepackage{booktabs}
\usepackage[table]{xcolor}

\newcommand{\xth}[1]{$#1^{\text{th}}$}
\newcommand{\ceil}[1]{\left\lceil #1 \right\rceil}

\newcommand{\word}{\mathcal{W}}
\newcommand{\lcp}{\text{lcp}\xspace}

\newcommand{\BWT}{\ensuremath{\mathrm{BWT}}\xspace}
\newcommand{\RLBWT}{\ensuremath{\mathrm{RLBWT}}\xspace}
\newcommand{\RLF}{\ensuremath{\mathrm{RF}}\xspace}
\newcommand{\SA}{\ensuremath{\mathrm{SA}}\xspace}
\newcommand{\ISA}{\ensuremath{\mathrm{ISA}}\xspace}

\newcommand{\LF}{\ensuremath{\mathrm{LF}}\xspace}
\newcommand{\FL}{\ensuremath{\mathrm{FL}}\xspace}
\newcommand{\LCP}{\ensuremath{\mathrm{LCP}}\xspace}
\newcommand{\ILCP}{\ensuremath{\mathrm{LCP}^{+}}\xspace}
\newcommand{\PLCP}{\ensuremath{\mathrm{PLCP}}\xspace}
\newcommand{\IPLCP}{\ensuremath{\mathrm{PLCP}^{+}}\xspace}

\DeclareMathOperator{\pred}{pred}
\DeclareMathOperator{\successor}{succ} 
\DeclareMathOperator{\rank}{rank}

\DeclareMathOperator{\outputint}{\mathcal{O}}

\newsiamthm{invariant}{Invariant}
\crefname{invariant}{Invariant}{Invariants}
\Crefname{invariant}{Invariant}{Invariants}
\crefalias{invariant}{invariant}

\newcommand{\PRANK}{P_{\text{r}}}
\newcommand{\QRANK}{Q_{\text{r}}}
\newcommand{\MOVE}[1]{\mathcal{M}_{#1}}

\newcommand{\PLIST}{\mathcal{L}_P}
\newcommand{\QLIST}{\mathcal{L}_Q}

\begin{document}

\newcommand\relatedversion{}

\title{\Large Optimal-Time Move Structure Construction}
   \author{Nathaniel K. Brown\thanks{Department of Computer Science, Johns Hopkins University, Baltimore, MD, USA (\email{nbrown99@jh.edu}, \email{blangme2@jhu.edu}). N. K. Brown is supported by NSERC PGS-D and NIH grant R01HG011392; B. Langmead is supported by NIH grant R01HG011392.}
    \and Ahsan Sanaullah\thanks{Department of Computer Science, University of Central Florida, Orlando, FL, USA (\email{ahsan.sanaullah@ucf.edu}, \email{shzhang@cs.ucf.edu}). The majority of A. Sanaullah's work was completed while affiliated with the University of Central Florida; their current affiliation is Department of Data Science, Dana-Farber Cancer Institute, Boston, MA, USA. Both authors are supported by NIH grant R01HG010086.}
    \and Shaojie Zhang\footnotemark[2]
    \and Ben Langmead\footnotemark[1]}

\date{}








\maketitle

\begin{abstract}
The move structure represents a permutation $\pi$ of $[0,n)$ by partitioning the domain into $O(r)$ disjoint, contiguously permuted intervals, with $r$ being the minimum number of such intervals. This data structure occupies $O(r)$ words of space and enables $O(1)$-time computation of $\pi(i)$ given the interval that contains $i$. For permutations where $r \ll n$, this provides an efficient, compressed representation for navigation. While existing best $O(r)$-space construction approaches require $O(r\log r)$-time, we present an optimal $O(r)$-time and space construction algorithm. This is achieved by replacing balanced search trees with pointer-based lists by introducing a bidirectional strategy that synchronizes construction of the structures for $\pi$ and its inverse $\pi^{-1}$ in a single, unified pass. By applying this algorithm, we achieve the first optimal $O(n)$-time construction of the longest common prefix (LCP) array from a run-length-encoded Burrows-Wheeler transform (RLBWT) of $r$ runs in $O(r)$ working space. Empirical evaluation on pangenome-scale data confirms that our move structure construction algorithm is consistently faster than the previous best, achieving speedups of up to $\sim 2\times$ with comparable memory usage.
\end{abstract}




\section{Introduction}
Nishimoto and Tabei's \textit{move structure}~\cite{nishimoto2021optimal} represents run-length compressible permutations in compact space. Formally, for a permutation $\pi$ over $[0,n)$, let $r = 1 + |\{i\in [1,n) : \pi(i) - 1 \neq \pi(i-1)\}|$ be the size of the minimal partition of $[0,n)$ into disjoint, contiguously permuted intervals. This structure requires $O(r)$ words of space and enables $O(1)$-time evaluation of $\pi(i)$ given the interval containing $i$. This constant query time is achieved via a ``balancing'' phase that splits some intervals, with the best known construction algorithm taking $O(r\log r)$-time and $O(r)$-space given a run-length encoded input~\cite{nishimoto2021optimal}.
The move structure has been used extensively in recent algorithmic developments using \textit{Burrows-Wheeler transform} (\BWT) based compressed text indexing strategies~\cite{nishimoto2021optimal, brown2022rlbwt, bertram2024move, depuydt2025b, brown2026col, kimura_et_al:LIPIcs.CPM.2026.10, zakeri2026movi}.

The \BWT~\cite{bwt} is fundamental to efficient text processing. When computed over a repetitive text collection of total length $n$, it often contains relatively fewer runs $r$, leading to the \textit{run-length encoded BWT} (\RLBWT). Consequently, researchers have developed techniques operating in $O(r)$-space~\cite{gagie2020fully,nishimoto2021optimal}. In genomics, where $r$ often grows sublinearly with respect to $n$~\cite{kuhnle2019efficient}, these structures are essential for managing pangenome-scale data where the size of text collections can reach the order of trillions~\cite{ropebwt3,zakeri2026movi,zakeri2024movi,sanaullah2026rlbwt}.

Indexes built on repetitive texts rely on the $\RLBWT$-based permutations $\LF, \FL, \phi$ and $\phi^{-1}$ to perform efficient pattern-matching queries. Where $r$ is the number of runs in the \RLBWT, they can represented as move structures in $O(r)$-space with $O(1)$ step query time~\cite{nishimoto2021optimal}. Sanaullah et al.~\cite{sanaullah2026rlbwt} leveraged this to compute the longest common prefix array (\LCP) from an \RLBWT in $O(n+r \log r)$-time and $O(r)$ working space. Their complexity, however, is constrained by the $O(r \log r)$ construction time of the underlying move structures.

In this paper, we present an optimal $O(r)$-time and space construction algorithm for the move structure. This is achieved by replacing balanced search trees with pointer-based lists via a synchronized construction of the structures for $\pi$ and $\pi^{-1}$ in a single, unified pass. Consequently, we provide the first optimal $O(n)$-time algorithm to compute the LCP array from the RLBWT using $O(r)$ working space. Finally, empirical evaluation demonstrates that our construction is consistently faster, achieving speedups of up to $\sim 2\times$ over the previous best, while maintaining comparable memory usage.

\section{Preliminaries}
\subsection{Notation}

A \textit{predecessor query} on a sorted array $A$ is defined as $A.\pred(x)=\max \{y\in A~|~y\leq x\}$ and a \textit{successor query} as $A.\successor(x)=\min \{y\in A~|~y > x\}$. We treat sorted arrays of distinct elements interchangeably as 0-indexed arrays and as ordered sets, permitting standard set-theoretic operations directly over array elements. For a sorted array $A$ of distinct elements, the rank query $j=A.\rank(x)$ denotes the 0-based index of $y=A.\pred(x)$ in sorted order. A \textit{string} $S[0..n-1]$ is an array of symbols from an ordered alphabet $\Sigma$ with $|\Sigma| = \sigma$. Let $S[i..j]$ represent the \textit{substring} $S[i] \dots S[j]$. A \textit{prefix} of $S$ is a substring $S[0..j]$, whereas a \textit{suffix} is a substring $S[i..n-1]$. Let $\lcp(S_1,S_2)$ denote the length of the longest common prefix between strings $S_1$ and $S_2$. A \textit{text} $T[0..n-1]$ is assumed to be terminated by the special symbol $\$ \notin \Sigma$ of smallest order so that suffix comparisons are well defined. We use the word RAM model, assuming machine words of size $\word=\Theta(\log n)$ with basic arithmetic and logical bit operations in $O(1)$-time. Space complexities in this paper are in words unless the number of bits is specified.

\subsection{Suffix Array and Burrows-Wheeler Transform}

The \textit{suffix array} (\SA) of a text $T[0..n-1]$ is  $\SA[0..n-1]$ such that $\SA[i]$ is the starting position of the \xth{i} lexicographically smallest suffix of $T$. The \textit{inverse suffix array} (ISA) is its inverse such that $\ISA[i]$ is the lexicographic rank of the suffix starting at $i$ in $T$. The \textit{Burrows-Wheeler Transform} (\BWT)~\cite{bwt} is a permutation $\BWT[0..n-1]$ of a text $T[0..n-1]$ such that $\BWT[i] = T[(\SA[i] - 1) \mod n]$. It is reversible by the \textit{last-to-first} (\LF) mapping $\LF(i)$, a permutation over $[0,n)$ satisfying $\SA[\LF(i)]=(\SA[i]-1)\mod n$. The \textit{first-to-last} (\FL) mapping is its inverse satisfying $\SA[\FL(i)]=(\SA[i]+1)\mod n$. Let $r$ be the number of maximal equal character runs of the \BWT. The \textit{run-length encoded BWT} (\RLBWT) is an array $\RLBWT[0..r-1]$ of tuples where $\RLBWT[i].c$ is the character of the \xth{i} \BWT run and $\RLBWT[i].\ell$ its length.

The permutation $\phi$ over $[0,n)$ is defined such that $\phi(\SA[i])=\SA[(i-1) \mod n]$, returning the \SA value for the suffix of preceding lexicographic rank. Its inverse $\phi^{-1}$ is defined symmetrically such that $\phi^{-1}(\SA[i])=\SA[(i+1) \mod n]$. The \textit{longest common prefix array} (\LCP) stores the \lcp between lexicographically adjacent suffixes in $T[0..n-1]$, i.e., $\LCP[0..n-1]$ such that $\LCP[0]=0$ and $\LCP[i]=\lcp(T[\SA[i-1]..n-1], T[\SA[i]..n-1])$ for $i>0$. The \textit{permuted LCP array} (\PLCP) reorders the \LCP array in text order such that $\PLCP[i]=\LCP[\ISA[i]]$. An \textit{irreducible LCP} is an $\LCP[i]$ where $i$ is the start of a \BWT run, and an \textit{irreducible PLCP} is its equivalent $\PLCP[\SA[i]]$. For $r$ \BWT runs, we refer to the arrays of irreducible LCP and PLCP values as $\ILCP[0..r-1]$ and $\IPLCP[0..r-1]$ respectively.

\subsection{LCP Array from RLBWT}
\label{sec:lcp_rlbwt}

Kasai et al.~\cite{kasai2001linear} gave the first optimal $O(n)$-time algorithm to construct the \LCP array from the \SA. Subsequently, K\"arkk\"ainen, Manzini, and Puglisi~\cite{karkkainen2009permuted} introduced the $\phi$ algorithm (named for its use of the $\phi$ permutation) to compute the $\PLCP$ array as an intermediate step to recover the $\LCP$ array. However, both approaches require $O(n)$-space for random access to the text. To overcome this, Beller et al.~\cite{beller2013computing} and its extension by Prezza and Rosone~\cite{prezza2019space} avoid random access to the text by using a wavelet tree to compute the \LCP array from the \BWT in $O(n \log \sigma)$-time. This algorithm operates in succinct space: $o(n \log \sigma)$ bits of working space in addition to the input/output.

Recently, Sanaullah et al.~\cite{sanaullah2026rlbwt} showed how to compute the \LCP array from the \RLBWT in $O(n + r \log r)$-time and $O(r)$ working space by adapting the $\phi$ algorithm to use move structures. Their method relies on the observation that only the $O(r)$ irreducible $\PLCP$ values are needed; when sampled alongside a $\phi$/$\phi^{-1}$ move structure, the remaining \PLCP values can be efficiently recovered. Further, they simulate random access to the text, as required in the $\phi$ algorithm, by using a move structure for $\FL$. Their key insight is that by sampling $\ISA$ values, they can navigate the $\FL$ move structure to compute irreducible $\PLCP$ values in $O(n)$-time using $O(r)$-space. 

However, the overall time complexity of this approach inherits the $O(r \log r)$ move structure construction term required to support \FL queries, resulting in $O(n + r \log r)$-time and $O(r)$ working space for \LCP array computation. Although length capping~\cite{brown2026boundingaveragestructurequery} enables optimal-time performance for operations like suffix array enumeration and BWT inversion by using an alternative move structure construction, the $\phi$ algorithm does not currently benefit. Therefore, achieving optimal-time \LCP construction requires a move structure construction algorithm bounded by $O(r)$-time. We summarize the necessary foundational results from Sanaullah et al.~\cite{sanaullah2026rlbwt} below; complete proofs and algorithmic details are deferred to the Appendix.

\begin{theorem}[Sanaullah et al.~\cite{sanaullah2026rlbwt}]\label{thm:irreducible_og}
    Given $\RLBWT[0..r-1]$ over $T[0..n-1]$, the irreducible \PLCP array $\IPLCP[0..r-1]$ can be constructed in $O(n + r \log r)$-time and $O(r)$-space.
\end{theorem}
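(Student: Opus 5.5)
We build a balanced \FL move structure from $\RLBWT[0..r-1]$ in $O(r\log r)$-time and $O(r)$-space (Nishimoto and Tabei), and then compute the irreducible \PLCP values by simulating the $\phi$ algorithm of K\"arkk\"ainen et al.\ in text order. The first step is to recover the run heads of \LF/\FL from the \RLBWT. The \LF-images of run starts are obtained from character counts $C[c]$ plus the cumulative length of earlier runs of the same character, computable in $O(r)$ after an $O(r\log r)$ (or radix) sort by character. Feeding these $O(r)$ intervals to the balancing construction gives the \FL move structure in $O(r\log r)$ time with $O(r)$ intervals, each query step taking $O(1)$.

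The second step is to sample, for each \BWT run start $i$, the text position $\SA[i]$ and the predecessor value $\SA[i-1]=\phi(\SA[i])$, together with the \FL-interval index containing the row $\ISA[\SA[i]-1]$ or a suitable nearby row. The $\SA$ samples at run boundaries can be obtained by one full $\LF$ traversal of the text from the row of \$. This costs $n$ move steps, so $O(n)$ time plus the $O(r\log r)$ construction. During the traversal we record, whenever we hit a run start or run end, the current text position, and we sort these $O(r)$ samples by text position in $O(r\log r)$ time.

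The core step processes the irreducible positions $p_1<p_2<\dots<p_r$ (the values $\SA[i]$ for run starts $i$) in increasing text order. We maintain a pair of \FL cursors: one for the suffix at $p_k$, one for the suffix at $\phi(p_k)$. Each cursor consists of a row together with its \FL interval index, and both are obtained from the samples. The $\phi$-algorithm invariant $\PLCP[p+1]\ge \PLCP[p]-1$ lets us start comparing at offset $\PLCP[p_k]\ge \PLCP[p_{k-1}]-(p_k-p_{k-1})$. We then extend the comparison by stepping both cursors with \FL, reading characters $F$ of the current rows via the interval's character, until mismatch.

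The main obstacle is skipping the first $\max(0,\PLCP[p_{k-1}]-(p_k-p_{k-1}))$ characters without paying for them. This requires positioning the cursors at offset $\ell$ from $p_k$ and from $\phi(p_k)$. The total number of \FL steps must telescope to $O(n)$, as in the standard amortization where the total matched characters are at most $n+\sum(p_k-p_{k-1})\le 2n$. To get there, I would keep the cursors from the previous iteration, which sit at text positions $p_{k-1}+\PLCP[p_{k-1}]$ and $\phi(p_{k-1})+\PLCP[p_{k-1}]$. I would then argue, via the fact that $p_{k-1}+1,\dots,p_k$ are reducible, that the suffixes at $p_{k-1}+j$ and $\phi(p_{k-1}+j)=\phi(p_{k-1})+j$ stay aligned. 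Consequently the old cursors already point to the correct positions for $p_k$ at offset $\PLCP[p_{k-1}]-(p_k-p_{k-1})$. Only mismatch-extension steps are performed fresh, and each cursor's text position is nondecreasing overall, giving $O(n)$ \FL steps in total. Combined with the $O(r\log r)$ construction and sorting, this yields $O(n+r\log r)$ time and $O(r)$ space.
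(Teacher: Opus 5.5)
\textbf{There is a genuine gap: the $\phi$-side cursor cannot be carried over between iterations.} Your treatment of the cursor at $p_k+\ell$ is fine, but the argument for the cursor at $\phi(p_k)+\ell$ fails.

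Only the positions strictly between consecutive irreducible positions are reducible. Hence $\phi(p_{k-1}+m)=\phi(p_{k-1})+m$ holds only for $0\le m<p_k-p_{k-1}$. The position $p_k$ itself is irreducible (you list it among the reducible ones): it starts a new $\phi$ input interval, and $\phi(p_k)$ is in general unrelated to $\phi(p_{k-1})+(p_k-p_{k-1})$. Your alignment argument therefore shows at best that the old cursor at $\phi(p_{k-1})+\PLCP[p_{k-1}]$ is correct for the reducible positions inside $[p_{k-1},p_k)$, which you never need. It says nothing about the position $\phi(p_k)+\ell_k$ required at $p_k$, where $\ell_k=\max(0,\PLCP[p_{k-1}]-(p_k-p_{k-1}))$.

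In particular, your claim that each cursor's text position is nondecreasing is false for the $\phi$-side cursor, which jumps arbitrarily at each of the $r$ iterations. The paper names exactly this obstacle: $j=\phi(i)$ changes every iteration, so its \FL coordinates cannot be maintained by consecutive move queries.

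Your run-boundary samples do not rescue the bound. They give the row $\ISA[\phi(p_k)]$ (the row just before the run start), i.e.\ offset $0$. Walking from there to offset $\ell_k$ costs $\sum_k \ell_k$ move steps. This equals the sum of irreducible \PLCP values up to an $O(n)$ term, which in general is only $O(n\log r)$, not $O(n)$.

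\textbf{The missing idea is random access to \FL coordinates at arbitrary text positions with $O(n/r)$ cost per access.} The paper does this as follows:
\begin{itemize}
\item It samples $\ISA[k\lceil n/r\rceil]$ for $k\in[0,r)$, storing each sample with the index of its \FL interval. These samples are obtained in $O(n)$ time and $O(r)$ space by one text-order enumeration with $\MOVE{FL}$.
\item To place the $\phi$-side cursor at text position $j+\ell$, it jumps to sample $\lfloor (j+\ell)/\lceil n/r\rceil\rfloor$ and performs fewer than $\lceil n/r\rceil$ balanced move queries.
\item This repositioning happens once per irreducible position, so it costs $O(r\cdot\lceil n/r\rceil)=O(n)$ in total.
\end{itemize}
The text-side cursor is then advanced monotonically by consecutive move queries, and the fresh character comparisons amortize to $O(n)$ exactly as in your argument.
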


\begin{theorem}[Sanaullah et al.~\cite{sanaullah2026rlbwt}]\label{thm:lcp_og}
    Given $\RLBWT[0..r-1]$ over $T[0..n-1]$, $\LCP[0..n-1]$ can be constructed in $O(n + r\log r)$-time and $O(r)$ working space.
\end{theorem}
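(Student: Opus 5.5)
The plan is to reduce the statement to \Cref{thm:irreducible_og} and then recover every \LCP value from the irreducible ones with a $\phi$ move structure, in the style of Kärkkäinen et al.'s $\phi$ algorithm.

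\textbf{Step 1 (irreducible values).} Invoke \Cref{thm:irreducible_og} to obtain $\IPLCP[0..r-1]$ in $O(n + r\log r)$ time and $O(r)$ space. The paired sample positions $\SA[i]$ for run starts $i$ (or run ends) come from the same $O(r)$-space \FL/\LF navigation used in that construction.

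\textbf{Step 2 ($\phi$ move structure).} Build a move structure for $\phi$ from the \RLBWT in $O(r\log r)$ time and $O(r)$ space.

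\begin{itemize}
\item The text positions $\SA[i]$ for $i$ a run start are obtained via \LF traversal with toehold samples.
\item The contiguous intervals of $\phi$ are delimited by the text positions $\SA[i]$ for run starts $i$.
\item The image of each such start is $\SA[i-1]$, the \SA value at the end of the previous run.
\end{itemize}

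Sorting these $O(r)$ breakpoints and then running the balancing construction of~\cite{nishimoto2021optimal} costs $O(r\log r)$. It yields $O(r)$ intervals with $O(1)$-time steps, given the current interval index. Each irreducible value from Step 1 is attached to its interval start.

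\textbf{Step 3 (recover \LCP).} Use the standard fact that if position $j$ is not an irreducible PLCP position, then $\PLCP[j] \ge \PLCP[j-1]-1$, with equality $\PLCP[j]=\PLCP[j-1]-1$. This holds because $j-1$ and $\phi(j-1)$ are then preceded by the same character, so $\phi(j)=\phi(j-1)+1$.

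For the same reason, inside one $\phi$ interval starting at text position $p$, every position $p+k$ satisfies $\PLCP[p+k]=\PLCP[p]-k$, capped at zero. The only exception is where a position inside the interval is itself irreducible, which splitting the intervals handles.

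\textbf{Step 4 (write the output).} We must emit $\LCP$ in \SA order without $O(n)$ working space beyond the output. Traverse \SA order by moving through positions $i=0..n-1$ while maintaining $\SA[i]$ via the $\phi^{-1}$ move structure, or equivalently via $\phi$ starting from $\SA[n-1]$ and filling the output from right to left. The $\phi$ move structure can be sampled along the way.

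At each step we know the interval index $x$ containing the current text position $q$ together with its offset. Then $\LCP[i]=\PLCP[q]=\IPLCP[x]-(q-\text{start}_x)$, where the sampled value is the irreducible one stored for that balanced interval. When balancing splits an interval, the stored sample for a split piece is computed by this same subtraction, which is sound because the split points are not irreducible.

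Each $\phi$ step costs amortized $O(1)$ by the move-structure guarantee, so the whole pass is $O(n)$ time with $O(r)$ working space. Added to Step 1 and Step 2, the total is $O(n + r\log r)$.

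\textbf{Main obstacle.} The delicate step is justifying the interval formula of Step 3 across balanced and split intervals. Concretely, one must verify that the breakpoints of $\phi$ coincide exactly with the irreducible PLCP positions, so that the decrement-by-one relation holds throughout each interval. I would prove this first, using the characterization $\PLCP[j]=\PLCP[j-1]-1$ whenever $\BWT[\ISA[j]]=\BWT[\ISA[j]-1]$. The second thing to check is that move-structure queries stay $O(1)$ rather than requiring a fast-forward search. This follows from the balancing property of~\cite{nishimoto2021optimal}, since consecutive queries along a $\phi$ path always start from the interval index returned by the previous step.
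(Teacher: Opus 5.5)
Your proposal is correct and follows essentially the paper's proof: invoke Theorem~\ref{thm:irreducible_og}, then walk a $\phi$ (or $\phi^{-1}$) move structure through \SA order and expand each irreducible sample via Corollary~\ref{cor:irreducible}. The only difference is that the paper uses an $O(n)$-time length-capped $\phi^{-1}$ structure~\cite{brown2026boundingaveragestructurequery}, whereas your balanced $\phi$ structure (Theorem~\ref{thm:phi_move_og}, $O(n+r\log r)$ once the \LF traversal is counted) fits the stated bound just as well.

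There is one small slip in Step 3: the reason is that $j$ and $\phi(j)$ (not $j-1$ and $\phi(j-1)$) are preceded by the same character, which gives $\phi(j-1)=\phi(j)-1$. Also, no capping at zero is needed, since reducibility of $j$ already forces $\PLCP[j-1]=\PLCP[j]+1\geq 1$.
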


\subsection{Move Structure}
\label{sec:movestructure}
Given a permutation $\pi$ over $[0,n)$, let the set of \textit{input intervals} $I=\{I[0],\dots,I[r-1]\}$ be the partition of $[0,n)$ into $r$ maximal disjoint, contiguously permuted intervals ordered such that $\min(I[j])<\min(I[j+1])$. Let $P[0..r]$ be the sorted array built over the interval starting positions $\{0\} \cup \{i \in[1,n)~|~\pi(i-1) \neq \pi(i)-1\}$ appended with the boundary element $P[r]=n$. Notice that $P$ implicitly represents $I$ such that $I[j]=[P[j],P[j+1])$ and $P[j]=\min(I[j])$. These intervals identify redundancy, since they permute contiguously: if $i \in I[j]$ then $\pi(i)=\pi(P[j]) + (i-P[j])$. For any position $i\in[0,n)$, if $j=P.\rank(i)$ then $i\in I[j]$. Hence, computing $\pi(i)$ from only $P$ inherits predecessor query bounds, yielding non-constant query time~\cite{belazzougui2015optimal}.

Given $i$ and $j=P.\rank(i)$, Nishimoto and Tabei's \textit{move structure}~\cite{nishimoto2021optimal} computes $\pi(i)$ and $P.\rank(\pi(i))$ in $O(r)$-space and $O(1)$-time. Let $P_\pi[0..r-1]$ be defined such that $P_\pi[j]=\pi(P[j])$. Knowing $i\in I[j]$, we can easily compute $\pi(i)=P_\pi[j]+(i-P[j])$. To return the input interval containing $\pi(i)$ without a predecessor query, they store $\PRANK[0..r-1]$ where $\PRANK[j]=P.\rank(P_\pi[j])$; intuitively, $\PRANK[j]$ is the index of the input interval containing $\pi(P[j])$. Since $\pi(i) \geq \pi(P[j])$, we perform a linear scan of $P$ starting from $\PRANK[j]$ to find the largest $j'$ such that $P[j'] \leq \pi(i)$, thereby identifying the interval containing $\pi(i)$. Although $P$, $P_\pi$, and $\PRANK$ are $O(r)$-space, bounding the linear scan requires a more careful construction. Overall, this operation is defined as a \textit{move query}; see Figure~\ref{fig:move_structure} for a visualization of the underlying array structures and interval decomposition.

\begin{figure}[htbp]
    \centering
    \begin{minipage}[b]{\columnwidth}
        \centering
        \includegraphics[width=\linewidth]{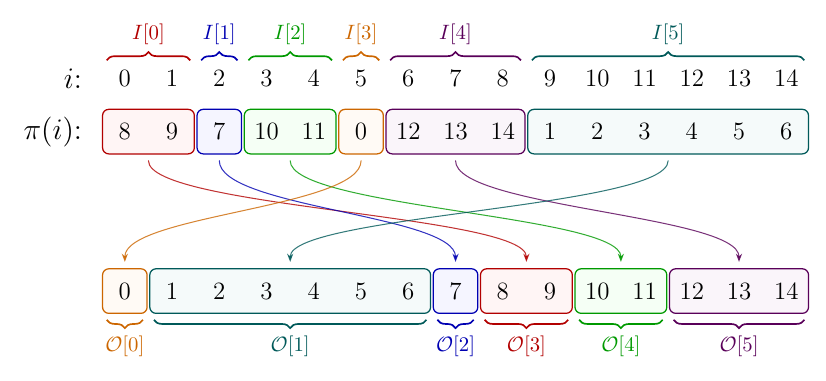}
        \subcaption{Visual representation of move structure intervals.}
    \end{minipage}%
    \hspace{0.03\textwidth}
    \begin{minipage}[b]{\columnwidth} 
        \centering
        \begingroup
        \setlength{\tabcolsep}{1pt} 
        \renewcommand{\arraystretch}{1.2}
        \small 
        \begin{tabular}[t]{w{c}{0.7cm} | w{c}{0.5cm} w{c}{0.5cm} w{c}{0.5cm} w{c}{0.5cm} | w{c}{0.5cm} w{c}{0.5cm}}
            \toprule
            \textbf{$j$} & $P$ & $P_\pi$ & $P_{\text{r}}$ & $\tau$ & $Q$ & $\tau^{\text{-}1}$ \\
            \midrule
            \textbf{0} & \cellcolor{red!15}0 & \cellcolor{red!15}8 & \cellcolor{red!15}4 & \cellcolor{red!15}3 & \cellcolor{orange!15}0 & \cellcolor{orange!15}3 \\
            \textbf{1} & \cellcolor{blue!15}2 & \cellcolor{blue!15}7 & \cellcolor{blue!15}4 & \cellcolor{blue!15}2 & \cellcolor{teal!15}1 & \cellcolor{teal!15}5 \\
            \textbf{2} & \cellcolor{green!15}3 & \cellcolor{green!15}10 & \cellcolor{green!15}5 & \cellcolor{green!15}4 & \cellcolor{blue!15}7 & \cellcolor{blue!15}1 \\
            \textbf{3} & \cellcolor{orange!15}5 & \cellcolor{orange!15}0 & \cellcolor{orange!15}0 & \cellcolor{orange!15}0 & \cellcolor{red!15}8 & \cellcolor{red!15}0 \\
            \textbf{4} & \cellcolor{violet!15}6 & \cellcolor{violet!15}12 & \cellcolor{violet!15}5 & \cellcolor{violet!15}5 & \cellcolor{green!15}10 & \cellcolor{green!15}2 \\
            \textbf{5} & \cellcolor{teal!15}9 & \cellcolor{teal!15}1 & \cellcolor{teal!15}0 & \cellcolor{teal!15}1 & \cellcolor{violet!15}12 & \cellcolor{violet!15}4 \\
            \bottomrule
        \end{tabular}
        \endgroup
        \subcaption{Arrays ($P[6]=Q[6]=15$).}
    \end{minipage}
    
    \caption{Move structure and interval mapping. Colored arrows indicate the sorting permutation $\tau$, which maps input intervals $I[j]$ to their output intervals $\mathcal{O}[k]$.}
    \label{fig:move_structure}
\end{figure}

\begin{definition}
\label{def:move}
A \textit{move query} takes a position $i\in[0,n)$ and $j=P.\rank(i)$, and returns $\pi(i)$ and $j'=P.\rank(\pi(i))$.
\end{definition}

\subsubsection{Balancing}
\label{sec:ogbalance}

Bounding the time complexity of a move query depends on \textit{output intervals} and a \textit{balancing} procedure. Let $Q[0..r]$ be the sorted elements of $P_\pi$ appended with boundary element $Q[r]=n$, where $\tau: [0, r) \to [0, r)$ represents the sorting permutation such that $P_\pi[j] = Q[\tau(j)]$. Then $Q$ implicitly represents the set of \textit{output intervals} $\outputint=\{\outputint[0],\dots,\outputint[r-1]\}$ with $\min(\outputint[k]) < \min(\outputint[k+1])$ such that $\outputint[k]=[Q[k], Q[k+1])$ and $Q[k]=\min(\outputint[k])$. Intuitively, output intervals represent the exact images of input intervals under $\pi$; if $\tau(j)=k$, then $I[j]$ maps bijectively onto $\outputint[k]$ with $|I[j]|=|\outputint[k]|$. It follows that output intervals are disjoint and partition $[0,n)$.

Consider an arbitrary position $i\in I[j]$ mapping to $\pi(i)\in \outputint[k]$ such that $\tau(j)=k$. The cost of the linear scan to return $j' = P.\rank(\pi(i))$ after accessing $\PRANK[j]$ is $w = |\{ p \in P \mid Q[k] < p \leq \pi(i) \}|$. Informally, $w$ counts the number of input boundaries falling between the start of the output interval $\outputint[k]$ and the position $\pi(i)$. Without structural modification, a single output interval can be intersected by $O(r)$ input boundaries, causing a worst case query time of $\Theta(r)$~\cite{brown2022rlbwt}. Nishimoto and Tabei~\cite{nishimoto2021optimal} explained how to obtain a \textit{balanced} move structure satisfying $O(1)$-query time in $O(r)$-space by computing a new $P'\supseteq P$ which bounds any $w$ to a constant. Brown, Gagie, and Rossi~\cite{brown2022rlbwt} parameterized their result for a fixed integer $\alpha$, providing a time/space trade-off. Theorem \ref{thm:nt} summarizes these results, and Definition \ref{def:move_notation} details the notation for balanced move structures.

\begin{theorem}[Nishimoto and Tabei~\cite{nishimoto2021optimal}; Brown, Gagie, and Rossi~\cite{brown2022rlbwt}]
\label{thm:nt}
Let $\pi$ be a permutation on $[0,n)$ consisting of $r$ input/output intervals respectively represented by $P$ and $Q$. Then for any integer $\alpha \geq 2$ we can construct $P'\supseteq P$ and $Q' \supseteq Q$ such that any two consecutive elements $q_s < q_e$ in $Q'$ satisfy $|(q_s, q_e) \cap P'| < 2\alpha$. Further, $|P'|=|Q'|\leq \frac{\alpha\cdot r}{\alpha - 1}$; at most $\frac{r}{\alpha-1}$ new interval boundaries are introduced with respect to $P$ and $Q$.
\end{theorem}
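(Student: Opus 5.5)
We prove Theorem~\ref{thm:nt} by an iterative splitting argument paired with a charging (potential) argument. We maintain sets $P'\supseteq P$ and $Q'\supseteq Q$, initialized to $P$ and $Q$, together with the bijection between input intervals (from $P'$) and output intervals (from $Q'$) induced by $\pi$. Call an output interval $[q_s,q_e)$ \emph{heavy} if $|(q_s,q_e)\cap P'|\ge 2\alpha$. While a heavy output interval exists, we split it. Let $p_1<\dots<p_m$, with $m\ge 2\alpha$, be the elements of $P'$ strictly inside it, and set $x=p_\alpha$. We add $x$ to $Q'$, which splits $[q_s,q_e)$ into $[q_s,x)$ and $[x,q_e)$. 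Because output intervals are exact images of input intervals, $[q_s,q_e)=\pi([a,b))$ for some input interval $[a,b)$, so we must also add $y=a+(x-q_s)$ to $P'$ to keep the two partitions in bijection. The invariant that intervals of $P'$ and $Q'$ correspond one-to-one under $\pi$, with equal lengths, is preserved by construction, and $|P'|=|Q'|$ holds throughout.

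Correctness is immediate once the process terminates: no heavy output interval remains, which is exactly the condition $|(q_s,q_e)\cap P'|<2\alpha$. The two new pieces of the split interval contain $\alpha-1$ and $m-\alpha\ge\alpha$ interior elements of $P'$ (plus possibly the new $y$), so a piece may still be heavy. It will then be split later, and termination follows from the counting bound below.

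The main obstacle is bounding the number of splits by $r/(\alpha-1)$. Each split inserts one new point $y$ into $P'$, and that point can land inside some other output interval and make it heavy, causing a cascade. We handle this with a charging argument over elements of $P'$. Each split at $x=p_\alpha$ creates the piece $[q_s,x)$ containing at least $\alpha-1$ elements of $P'$ in its interior (namely $p_1,\dots,p_{\alpha-1}$), and we charge the split to these elements. We then argue that every element of $P'$, original or newly inserted, receives at most one charge. After the charge, those elements lie in an output interval whose left part is never split again unless new points are inserted into it. We refine the argument so that a point is charged only when it is among the first $\alpha-1$ interior points of the left piece, and each point is the interior of exactly one output interval at any time; we expect this to be the delicate part.

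Granting the one-charge-per-point claim, let $s$ be the number of splits. The total number of points is $r+s$, since each split adds one point to $P'$. If each split consumes $\alpha-1$ distinct charges, then $s(\alpha-1)\le r+s$, which gives only $s\le r/(\alpha-2)$ and is too weak. So we refine the charging. Points in the left piece $[q_s,x)$ are never charged again, whereas the new point $y$ and the points $p_{\alpha},\dots,p_m$ may be. Since new points are charged at most once each, we try to exclude the newly created point of each split from ever being charged, for instance by counting only original points of $P$ plus the left endpoints. This would give $s(\alpha-1)\le r$, that is, $s\le r/(\alpha-1)$. The conclusion is then $|P'|=|Q'|\le r+r/(\alpha-1)=\alpha r/(\alpha-1)$. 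If the strict charging to original points fails, we would fall back on the amortized potential $\Phi=\sum_{\text{output intervals}}\max(0,\,|(q_s,q_e)\cap P'|-(\alpha-1))$ and show that each split decreases $\Phi$ by at least $\alpha-1$, up to the $+1$ contributed by $y$.
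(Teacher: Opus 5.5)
\paragraph{Gap in the counting bound.} Your proposal does not establish the bound $s\le r/(\alpha-1)$, and that bound is the whole content of the theorem. The paper imports the theorem from the cited works without proof, so your argument has to close on its own.

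Your main claim, that each point is charged at most once and that points in the left piece $[q_s,x)$ are never charged again, is false for your charging rule. Here is how it fails. After a split at $x=p_\alpha$, suppose the output image of the input interval $[p_{\alpha-1},p_\alpha)$ is itself very heavy and is split repeatedly. Each such split inserts a new point $y$ into the gap $(p_{\alpha-1},p_\alpha)$. Once $\alpha+1$ of these $y$'s have landed there, the left piece has $2\alpha$ interior points and is heavy again. Its next split is at its $\alpha$-th interior point, which is the first of those $y$'s. The new left piece again has interior exactly $p_1,\dots,p_{\alpha-1}$, so those points are charged a second time.

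The refinement of excluding newly created points is also unsupported, since a $y$ can be among the first $\alpha-1$ interior points of a later left piece. Finally, counting over all $r+s$ points of $P'$ overcounts. Each split point $x$ becomes an output start and is interior to no output interval from then on.

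\paragraph{How to close it.} Your fallback potential $\Phi=\sum_O\max(0,w(O)-(\alpha-1))$ is the right tool, but the decisive computation is missing. Your phrase ``decreases by at least $\alpha-1$, up to the $+1$ from $y$'' is too weak if it means a drop of $\alpha-1$ before $y$ is inserted: then the net drop is only $\alpha-2$, giving $r/(\alpha-2)$ again.

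The actual computation is as follows. Take an interval with $m\ge 2\alpha$ interior points and split it at $p_\alpha$.
\begin{itemize}
\item Before the split it contributes $m-\alpha+1$.
\item After the split, the left piece has $\alpha-1$ interior points and contributes $0$. The right piece has $m-\alpha$ interior points and contributes $m-2\alpha+1$.
\item The drop is therefore exactly $\alpha$, because $x$ itself stops being interior.
\end{itemize}
Your split at the $\alpha$-th interior point, rather than the paper's $(\alpha+1)$-th, changes nothing here; both give a drop of exactly $\alpha$.

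Next, $y$ lies strictly inside an input interval, so it is new to $P'$. Inserting it raises one output weight by at most one, so $\Phi$ rises by at most one. Each split therefore lowers $\Phi$ by at least $\alpha-1\ge 1$. Since $\Phi\ge 0$ and $\Phi_0\le\sum_O w_0(O)=|P\setminus Q|\le r-1$ (because $0\in P\cap Q$), you get $s(\alpha-1)\le r-1$. This yields $s\le r/(\alpha-1)$ and $|P'|=|Q'|\le \alpha r/(\alpha-1)$. Termination is trivial in any case, since $Q'\subseteq[0,n)$.

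\paragraph{Rescuing your charging idea.} Your charging approach can also be repaired along the same lines.
\begin{itemize}
\item The number of interior points $|P'\setminus Q'|$ never increases: each split removes $x$ from this set and adds at most $y$.
\item The output interval whose right endpoint is a split point $x$ always keeps at least $\alpha-1$ interior points. It has exactly $\alpha-1$ when created, gains only insertions, and a re-split leaves at least $\alpha$ points on its right part.
\end{itemize}
Charge each split to the final output interval ending at its split point. These intervals are distinct, so $s(\alpha-1)\le r-1$, the same inequality.
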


\begin{definition}\label{def:move_notation}
    For a permutation $\pi$ with $r$ input intervals, let $\MOVE{\pi}$ denote any balanced move structure that computes its respective move query (Definition~\ref{def:move}) in $O(1)$-time and $O(r)$-space. The query notation $\MOVE{\pi}.move(i,j)$ returns the pair $(\pi(i),j')$.
\end{definition}

The process of balancing involves iterating until $P'$ does not contain a \textit{heavy} output interval $\outputint[k]=[q_s, q_e)$ of \textit{weight} $w=|(q_s, q_e)\cap P'|\geq 2\alpha$. For a heavy output interval $\outputint[k]$, balancing selects the $(\alpha + 1)$-th element $x$ in $(q_s, q_e) \cap P'$ and splits the output interval into $[q_s,x)$ and $[x,q_e)$. For the corresponding input interval $I[j]=[p_s,p_e)$ where $j=\tau^{-1}(k)$, we make the equivalent input interval split at $x'=p_s+(x-q_s)$ into $[p_s,x')$ and $[x',p_e)$. However, each split may cause a heavy output interval; we must find $q_s'= Q'.\pred(x')$ and $q_e'=Q'.\successor(x')$ and check the weight $w=|(q_s',q_e')\cap P'|$. Figure~\ref{fig:balancing_split} shows an example.

\begin{figure}[htbp]
    \centering
    \includegraphics[width=\columnwidth]{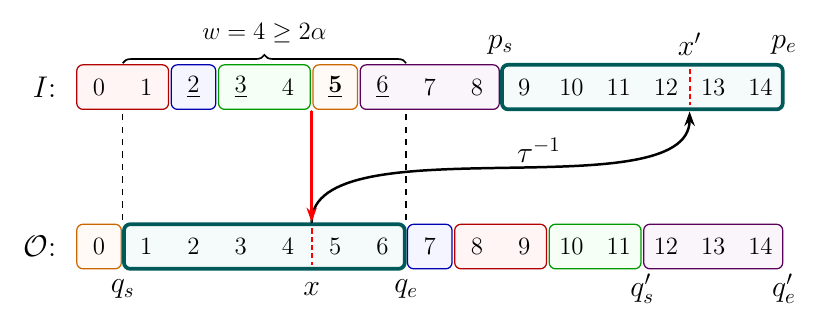}
    \caption{Balancing step for $\alpha=2$. The output interval with $(q_s,q_e)\cap P= \{2,3,5,6\}$ is heavy with weight $w=4$, triggering a split at $x=5$. The mapping $\tau^{-1}$ is used to find the corresponding input interval split point $x'$.}
    \label{fig:balancing_split}
\end{figure}

\subsubsection{Construction}

Constructing a balanced move structure relies on $P$, $P_\pi$, and $\tau$. Computing $P$ and $P_\pi$ given $\pi$ requires an $O(n)$-time preprocessing step ($O(n)$-space, or $O(r)$-space if $\pi$ is streamed), but they are often given as standard run-length encoded representations for move structures. From these, $\tau$ can be derived in $O(n)$-time and $O(r)$-space~\cite{nishimoto2021optimal}. However, for many permutation classes, $\tau$ (as well as $P$ and $P_\pi$) can be computed faster by exploiting structural properties that make $\pi$ runny. To prevent an $O(n)$ derivation bottleneck, construction algorithms~\cite{nishimoto2021optimal,bertram2024move,brown2026boundingaveragestructurequery} assume $\tau$ is provided alongside the run-length encoded input $P$ and $P_\pi$, with the understanding that its derivation can be added as a preprocessing step to construction algorithms.

Given these inputs, Nishimoto and Tabei~\cite{nishimoto2021optimal} represent $P$ and $Q$ as balanced search trees. As established in Theorem~\ref{thm:nt}, their approach requires $O(r)$ balancing steps. Because each step performs insert, predecessor, and successor queries in $O(\log r)$-time, the total cost to balance the structure is $O(r \log r)$-time and $O(r)$-space (Theorem~\ref{thm:oldbalance}). Bertram, Fischer, and Nalbach~\cite{bertram2024move} later introduced an asymptotically equivalent but practically optimized algorithm using the balancing parameter $\alpha$. Their method processes $Q$ from left to right (smallest to largest), identifying heavy output intervals. If an insertion into $P$ creates a heavy interval among the already-balanced intervals, the algorithm uses recursive calls to restore balance.

\begin{theorem}[Nishimoto and Tabei~\cite{nishimoto2021optimal}]
\label{thm:oldbalance}
    Given $P$, $P_\pi$, and $\tau$ for a permutation $\pi$ with $r$ input intervals, a balanced move structure $\MOVE{\pi}$ can be constructed in $O(r \log r)$-time and $O(r)$-space.
\end{theorem}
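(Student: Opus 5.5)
The plan is to realize the balancing procedure of Section~\ref{sec:ogbalance} on top of two balanced search trees, one storing $P'$ and one storing $Q'$, each with $O(\log r)$ insert, predecessor and successor. Theorem~\ref{thm:nt} then bounds the number of balancing steps by $O(r)$, and I charge $O(\log r)$ to each step.

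\textbf{Initialization.} Insert the elements of $P$ and of $Q$ into their trees, costing $O(r\log r)$ time. Sorting $P_\pi$ into $Q$ is immediate from $\tau$, since $Q[\tau(j)]=P_\pi[j]$. Each node of $Q'$ stores a pointer to its paired input interval in $P'$; this pairing is given by $\tau^{-1}$ and is maintained under splits, so that $x'=p_s+(x-q_s)$ can be computed in $O(1)$ time. Next, scan every output interval, count $|(q_s,q_e)\cap P'|$ by walking the $P'$ tree from $\successor(q_s)$, and push every heavy interval onto a work stack. This scan takes $O(r)$ time overall, because each element of $P$ lies in exactly one output interval.

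\textbf{Balancing loop.} Pop a heavy interval $[q_s,q_e)$ and first recheck that it is still heavy. To find the $(\alpha+1)$-th element $x$, take one successor query and then $\alpha$ in-order steps. Insert $x$ into $Q'$ and $x'$ into $P'$. After the split, two things can become heavy:
\begin{itemize}
\item the two halves $[q_s,x)$ and $[x,q_e)$ of the split interval;
\item the output interval $[Q'.\pred(x'),Q'.\successor(x'))$ that now contains the newly inserted $x'$.
\end{itemize}
Check the weights of these $O(1)$ intervals, capping each count at $2\alpha$ so that checking takes $O(\alpha)$ tree steps, and push any that are heavy. Each iteration therefore performs $O(1)$ tree operations when $\alpha$ is a constant, or $O(\alpha)$ in general. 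Every iteration inserts a new boundary, so by Theorem~\ref{thm:nt} there are at most $r/(\alpha-1)$ iterations, and the loop costs $O(r\log r)$.

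\textbf{Termination, correctness and output.} The loop stops only when the stack is empty. At that point no heavy interval can exist, because the only intervals whose weight can grow are the ones rechecked above. Finally, traverse both trees in order to produce the arrays $P'$ and $P'_\pi$, and compute $\PRANK$ with a merge-like simultaneous scan of the sorted $Q'$ and $P'$ in $O(r)$ time. The space is $O(r)$ throughout.

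The main obstacle I expect is the iteration bound. Theorem~\ref{thm:nt} states a bound on the size of the final $P'$, but the proof must confirm that this particular greedy order of splitting achieves it. I would use the standard potential argument. Give each output interval of weight $w\ge 2\alpha$ a charge, and observe that a split at the $(\alpha+1)$-th element produces two halves of weight at most $w-\alpha-1$ and $w-\alpha$, while adding only one new input boundary. So each split reduces the total excess $\sum_k \max(0,w_k-\alpha+1)$ enough to pay for the single new boundary. This yields at most $r/(\alpha-1)$ new boundaries regardless of the order in which splits are processed.
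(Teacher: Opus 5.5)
Your proposal is correct and takes the same route as the paper. The paper proves this only by citation: it stores $P$ and $Q$ in balanced search trees, uses Theorem~\ref{thm:nt} to bound the number of balancing steps by $O(r)$, and charges $O(\log r)$ per step for the insert, predecessor and successor operations.

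Your potential argument for the iteration bound, which the paper does not reprove, needs one correction. Splitting at the $(\alpha+1)$-th element gives halves of weight exactly $\alpha$ and $w-\alpha-1$. The exact value $\alpha$ for the left half is what makes the argument work:
\begin{itemize}
\item it makes $\sum_k \max(0,w_k-\alpha+1)$ drop by $\alpha$, from $w-\alpha+1$ to $1+(w-2\alpha)$;
\item after the $+1$ contributed by $x'$, each split therefore still removes at least $\alpha-1$;
\item the initial total is at most $r$, so there are at most $r/(\alpha-1)$ splits, in any order.
\end{itemize}
The looser bounds you wrote (halves of weight at most $w-\alpha-1$ and $w-\alpha$) certify no decrease once $w\ge 3\alpha$.

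A minor point: your initial heaviness scan costs $O(r\log r)$ if you issue one successor query per output interval. A merge of the sorted $P$ and $Q$ does it in $O(r)$. Either way it is within budget.
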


\subsubsection{RLBWT Permutations}

Move structures were created for use with $\RLBWT$-based permutations $\LF$ and $\phi$ as well as their respective inverses $\FL$ and $\phi^{-1}$. As such, they can be represented in space proportional to the number of \BWT runs~\cite{nishimoto2021optimal}. Further, given the \RLBWT as input, these permutations permit efficient construction in $O(r)$-space. Existing results are summarized below.

\begin{lemma}
\label{lem:lf_input}
    Given $\RLBWT[0..r-1]$, then $P$, $P_{\pi}$, and $\tau$ with respect to $\LF$ or its inverse $\FL$ can be constructed in $O(r)$-time and $O(r)$-space~\cite{nishimoto2021optimal}.
\end{lemma}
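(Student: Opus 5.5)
Given $\RLBWT[0..r-1]$ with run characters $c_j$ and lengths $\ell_j$, I will define the input intervals of \LF to be the \BWT runs. Concretely, $P[j]=\sum_{j'<j}\ell_{j'}$ with $P[r]=n$, computed by one prefix-sum pass. This partition is valid because \LF is contiguous within an equal-character run: $\LF(i)=C[\BWT[i]]+\rank_{\BWT[i]}(\BWT,i)$, and inside a run the rank grows by one per position. It may be finer than the maximal partition, but it still has at most $r$ intervals, which suffices for the $O(r)$ bounds. The value $P_\pi[j]=\LF(P[j])$ equals $C[c_j]$ plus the total length of earlier runs with character $c_j$.

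To compute these values in $O(r)$ time without sorting over $\Sigma$, I will stably sort the runs by character, with ties broken by run index. Since \LF preserves the relative order of equal characters and stacks character blocks in increasing order, this sorted order is exactly the order of the output intervals. The sorting permutation is therefore $\tau$. Once $\tau$ is known, $Q$ is the prefix sum of the lengths $\ell_{\tau^{-1}(k)}$ taken in sorted order, and $P_\pi[j]=Q[\tau(j)]$.

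The main obstacle is performing this stable sort in $O(r)$ time when $\sigma$ may be large. I would first remap the characters to ranks in $[0,\min(\sigma,r))$, since at most $r$ distinct characters occur. A radix or counting sort then runs in $O(r)$ time under the standard assumption that $\sigma\le n^{O(1)}$: a radix sort with base $r$ uses $O(\log_r n)=O(1)$ passes when $r\ge n^{\epsilon}$. Alternatively, I would invoke the same integer-alphabet assumption that \cite{nishimoto2021optimal} uses to obtain this bound. Each counting-sort pass is stable, so ties keep run order.

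For $\FL=\LF^{-1}$, the intervals swap roles: the input intervals of \FL are the output intervals $\outputint[k]$ of \LF. So I will take $P^{\FL}=Q$, set $P^{\FL}_\pi[k]=P[\tau^{-1}(k)]$, and use $\tau_{\FL}=\tau^{-1}$, since these images are already sorted by $j$. Every step is a linear pass over $O(r)$ arrays, which gives the stated $O(r)$ time and space bounds.
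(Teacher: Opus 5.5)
Your construction is correct. Take the \BWT runs as input intervals, stably sort the runs by (character, run index) to get $\tau$, and take prefix sums of the sorted lengths to get $Q$ and $P_\pi[j]=Q[\tau(j)]$. For \FL, swap the roles and set $\tau_{\FL}=\tau^{-1}$. The paper gives no proof of its own and only cites Nishimoto and Tabei; yours is the standard construction behind that citation.

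The gap is in the one step that carries the $O(r)$ bound, namely the sort. Remapping the run characters to ranks in $[0,\min(\sigma,r))$ is not free preprocessing: computing order-preserving ranks of the distinct characters is itself the sorting problem you are trying to solve. The radix sort you describe needs $\Theta(\log_r \sigma)$ passes, i.e.\ $O(r\log n/\log r)$ time when $\sigma=n^{\Theta(1)}$. That is $O(r)$ only when $r\ge n^{\epsilon}$, which excludes exactly the regime the lemma is meant for. Move structures matter when $r\ll n$, e.g.\ $r$ polylogarithmic in $n$, and there your bound is $\omega(r)$. Falling back on ``the integer-alphabet assumption'' does not close this, since $\sigma\le n$ alone does not give linear-time sorting of $r$ keys. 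Some assumption is genuinely unavoidable: $\tau$ lists the runs in character order, so computing it amounts to sorting the run characters.

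The missing idea is to bound $\sigma$ in terms of $r$ rather than $n$. Under the usual effective-alphabet convention, characters are integers in $[0,\sigma)$ and each of them occurs in $T$. Then every character starts at least one \BWT run, so $\sigma\le r$. A single stable counting sort with $\sigma$ buckets then runs in $O(r+\sigma)=O(r)$ time and space, with no remapping needed. More generally, $\sigma\le r^{O(1)}$ suffices, via $O(1)$ passes of radix sort in base $r$. Counting sort over the raw alphabet would also need $O(\sigma)$ space, which is another reason the bound must be stated in terms of $r$. With this replacement, the rest of your argument goes through unchanged.
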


\begin{theorem}\label{thm:fl_move_og}
Given $\RLBWT[0..r-1]$, then $\MOVE{_{LF}}$ or its inverse $\MOVE{_{FL}}$ can be constructed in $O(r \log r)$-time and $O(r)$-space.
\end{theorem}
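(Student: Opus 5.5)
This follows by composing two earlier results. First, I invoke Lemma~\ref{lem:lf_input}: from $\RLBWT[0..r-1]$ it produces the run-length encoded inputs $P$, $P_\pi$, and the sorting permutation $\tau$ for $\pi=\LF$, or symmetrically for $\pi=\FL$. This takes $O(r)$-time and $O(r)$-space.

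Concretely, for $\LF$ the input boundaries $P$ are the run starts obtained by prefix sums over the run lengths $\RLBWT[j].\ell$. The value $P_\pi[j]=\LF(P[j])$ equals $C[c]$ plus the number of earlier occurrences of $c=\RLBWT[j].c$ among preceding runs, where $C[c]$ counts characters smaller than $c$. Both quantities can be obtained in $O(r)$ time by sorting the runs stably by character. Since $\sigma\le r$ after remapping characters to ranks, this sort is a counting sort and also takes $O(r)$ time. That same stable order is exactly $\tau$: output intervals of $\LF$ are ordered by (character, run index).

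For $\FL$, the roles swap. The input intervals of $\FL$ are the output intervals of $\LF$ in $F$-order, and $P_\pi$, $\tau$ for $\FL$ are $Q$, $\tau^{-1}$ from the $\LF$ computation. Inverting $\tau$ costs $O(r)$.

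Second, I apply Theorem~\ref{thm:oldbalance} to $P$, $P_\pi$, and $\tau$. This yields a balanced move structure $\MOVE{\pi}$ in $O(r\log r)$-time and $O(r)$-space, with $O(r)$ resulting intervals by Theorem~\ref{thm:nt}. Summing the two phases gives $O(r+r\log r)=O(r\log r)$ time and $O(r)$ space, as claimed.

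The only step needing any care is the first: checking that the number of input intervals of $\LF$ (respectively $\FL$) is at most $r$. Each \BWT run is mapped contiguously by $\LF$, so the partition of $[0,n)$ into run intervals is a valid, possibly non-maximal, contiguously permuted partition of size $r$. Balancing does not require maximality, so this suffices.

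The balancing itself is used as a black box, so no new obstacle arises there.
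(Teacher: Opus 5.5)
Your proof is correct and follows the paper's own argument, which composes Lemma~\ref{lem:lf_input} ($P$, $P_\pi$, $\tau$ in $O(r)$ time and space) with Theorem~\ref{thm:oldbalance} (balancing in $O(r\log r)$ time and $O(r)$ space). Your unpacking of Lemma~\ref{lem:lf_input} is not needed, and it has one slip: for \FL it is $P$, not $P_\pi$, that becomes $Q$, while the new $P_\pi[k]$ is $P[\tau^{-1}(k)]$ and the new sorting permutation is $\tau^{-1}$. Your remark that balancing does not need a maximal partition is a fair point the paper leaves implicit.
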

\begin{proof}
    By~\Cref{thm:oldbalance,lem:lf_input}
\end{proof}

\begin{lemma}
\label{lem:phi_input}
    Given $\RLBWT[0..r-1]$ corresponding to $\BWT[0..n-1]$, then $P$, $P_{\pi}$ and $\tau$ with respect to $\phi$ or its inverse $\phi^{-1}$ can be constructed in $O(n)$-time and $O(r)$-space~\cite{brown2026boundingaveragestructurequery, sanaullah2026rlbwt}. 
\end{lemma}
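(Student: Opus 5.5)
The plan is to obtain all required data in a single backward traversal of the text with an \LF move structure. Recall the standard characterization of $\phi$: if row $i$ starts a \BWT run (or $i=0$), then $\SA[i]$ is a boundary of $\phi$. Between two consecutive such sampled text positions, $\phi$ is contiguous, because consecutive rows inside a run are mapped by \LF to consecutive rows. Hence the input intervals of $\phi$ are determined by the set $\{\SA[i] : i \text{ a run start}\}$, and $P_\phi$ at such a boundary $\SA[i]$ equals $\SA[i-1]$, the \SA value at the end of the preceding run (cyclically). Symmetrically, for $\phi^{-1}$ the boundaries are the \SA values at run ends, and the images are the \SA values at the following run starts. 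It therefore suffices to compute, for every run $t$, the values $\SA[\text{start}_t]$ and $\SA[\text{end}_t]$ in $O(n)$ time and $O(r)$ space, together with their sorted orders.

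First, build $P$, $P_\pi$ and $\tau$ for \LF from $\RLBWT$ in $O(r)$ time by \Cref{lem:lf_input}. Then make it navigable in amortized $O(1)$ per step over a full text traversal. I would not use \Cref{thm:fl_move_og}, which costs $O(r\log r)$. Instead I would use the length-capped construction of~\cite{brown2026boundingaveragestructurequery}, cited above as giving optimal $O(n)$ BWT inversion. Length capping only refines the intervals, so every original run boundary remains an interval boundary. Knowing the current interval index and offset, we can test in $O(1)$ whether the current row is the first or last row of a \BWT run, provided each interval stores its run index and an is-run-start/is-run-end flag.

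Second, start from row $0$ (the row of $\$$, with $\SA[0]=n-1$) and apply \LF $n$ times, decrementing a counter $p$ that equals $\SA$ of the current row. Whenever the current row is a run start, append the pair $(p,t)$, with $t$ the run index, to a list $S$. Whenever it is a run end, append $(p,t)$ to a list $E$. Since $p$ decreases monotonically, $S$ and $E$ are produced in decreasing text order. Reversing them yields sorted order for free, so no integer sort (which could cost $\omega(r)$ space) is needed. During the same pass, store into arrays indexed by run the rank of each sample within $S$ and within $E$. This fills in the ranks after reversal in $O(r)$ time, using $O(r)$ words throughout.

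Third, assemble the output for $\phi$. $P$ is the reversed $S$ (which includes $0$ since run starts cover all breaks), with $P[r]=n$. For the entry of run $t$, set $P_\phi[j]$ to the $E$-sample of run $t-1 \bmod r$. Set $\tau(j)$ to the rank of that sample in the reversed $E$, which is exactly its position in $Q$, since $Q$ is the sorted set of images and those images are precisely the run-end samples. For $\phi^{-1}$, swap the roles of $S$ and $E$ and use run $t+1 \bmod r$. If strict maximality of intervals is demanded, a final linear scan merges adjacent $j$ with $P_\phi[j+1]=P_\phi[j]+(P[j+1]-P[j])$ and recomputes $\tau$ by compaction, in $O(r)$ time.

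The main obstacle is the first step: guaranteeing that $n$ consecutive \LF steps cost $O(n)$ total without the $O(r\log r)$ balancing. I would rely on the length-capping result as stated. If that were unavailable, my fallback would be to argue directly that, during a full inversion, the total unbalanced scan length telescopes to $O(n+r)$. That would be the delicate point to verify.
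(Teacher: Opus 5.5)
Your proposal is correct. It follows the route the paper relies on: the paper gives no proof of its own, only citing the length-capping and RLBWT-to-LCP works, and the argument behind those citations is the same one you give. That argument is a full $O(n)$-time BWT inversion over a length-capped \LF move structure, sampling \SA at run starts and run ends; the samples arrive already in text order, so $P$, $P_\phi$, and $\tau$ (and symmetrically those for $\phi^{-1}$) follow by an $O(r)$ assembly.

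Your fallback of bounding the unbalanced scan directly is unnecessary, and doubtful. An output interval of length $L$ whose $w$ interior input boundaries sit near its start contributes about $Lw$ scan steps over a full traversal. If you want to avoid length capping, the paper's own $O(r)$-time balanced \LF construction does not depend on this lemma and would serve just as well.
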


\begin{theorem}\label{thm:phi_move_og}
Given $\RLBWT[0..r-1]$, then $\MOVE{_\phi}$ or its inverse $\MOVE{_{\phi^{-1}}}$ can be constructed in $O(n + r \log r)$-time and $O(r)$-space.
\end{theorem}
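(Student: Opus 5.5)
This follows the same pattern as \Cref{thm:fl_move_og}: compose the input-construction result for $\phi$ with the generic balancing result. First, I will invoke \Cref{lem:phi_input}. Given $\RLBWT[0..r-1]$, it produces the run-length encoded inputs $P$, $P_\pi$, and $\tau$ for $\pi=\phi$, or symmetrically for $\pi=\phi^{-1}$, in $O(n)$-time and $O(r)$-space.

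Before applying balancing, I need to check that the number of input intervals of $\phi$ and of $\phi^{-1}$ is $O(r)$, where $r$ is the number of \BWT runs. This is what makes the inputs from \Cref{lem:phi_input} have size $O(r)$ and lets the balancing term read $O(r\log r)$ rather than depending on some other parameter.

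The check uses the standard fact that $\phi$ is contiguous except at \SA values sampled at run boundaries. If $\SA[i]-1$ and $\SA[i]$ both lie strictly inside no run-boundary sample, then $\phi(\SA[i]-1)=\phi(\SA[i])-1$, because $\LF$ maps $i$ and $i-1$ to consecutive positions whenever $\BWT[i-1]=\BWT[i]$. So every interval boundary of $\phi$ is the \SA value at the start of some \BWT run. This gives at most $r$ intervals, and the same argument with run ends handles $\phi^{-1}$. I will simply cite this, since \Cref{lem:phi_input} already outputs arrays of size $O(r)$.

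Second, I will apply \Cref{thm:oldbalance} to $P$, $P_\pi$, and $\tau$. It returns a balanced move structure $\MOVE{\phi}$ (respectively $\MOVE{\phi^{-1}}$) in $O(r\log r)$-time and $O(r)$-space. The space of the result stays $O(r)$ by \Cref{thm:nt}, since balancing adds at most $r/(\alpha-1)$ boundaries. Summing the two stages gives $O(n+r\log r)$-time, and the maximum of the space bounds is $O(r)$.

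There is essentially no obstacle here. The only point needing care is the one above: that the interval count passed into \Cref{thm:oldbalance} is $O(r)$ in terms of \BWT runs, and that the $O(r)$-space of \Cref{lem:phi_input} refers to working space, with $\phi$ never materialized over $[0,n)$. Both are supplied by the cited lemma.
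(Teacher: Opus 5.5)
Your proposal is correct and matches the paper's proof, which simply composes Lemma~\ref{lem:phi_input} (producing $P$, $P_\pi$, $\tau$ for $\phi$ or $\phi^{-1}$ in $O(n)$-time and $O(r)$-space) with Theorem~\ref{thm:oldbalance} ($O(r\log r)$-time, $O(r)$-space balancing). Your check that $\phi$ and $\phi^{-1}$ have $O(r)$ input intervals, with boundaries only at \SA values of \BWT run starts or ends, is sound; the paper leaves it implicit.
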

\begin{proof}
    By~\Cref{thm:oldbalance,lem:phi_input}
\end{proof}

\section{Optimal Move Structure Balancing}
\label{sec:balancing}

We present an optimal $O(r)$-time and space construction algorithm for move structures that satisfies the balancing criterion of Nishimoto and Tabei~\cite{nishimoto2021optimal}, improving upon their $O(r\log r)$-time baseline. Our algorithm relies on two key insights. First, we maintain linked lists instead of balanced search trees to support balancing operations. To support constant-time predecessor queries without search trees, we augment our lists with explicit cross-list predecessor pointers. Second, we balance $\pi$ and $\pi^{-1}$ simultaneously, whereas previous methods balanced only $\pi$.

Before detailing our formal structures, we highlight why replacing balanced search trees with linked lists is non-trivial. When an interval split forces an insertion in the opposite list via $\tau$ or $\tau^{-1}$, we must evaluate a predecessor query to identify if it has made some interval heavy. Without a search tree, resolving this predecessor in a standard linked list requires an $O(r)$-time scan. We avoid this by synchronizing the balancing of $\pi$ and $\pi^{-1}$ from left to right. By maintaining balance across both lists alongside predecessor pointers, any insertion landing in an already balanced region can resolve its predecessor and update other predecessor pointers upon a new insertion in $O(1)$-time via a local pointer walk.

\subsection{Linked Lists}\label{sec:linklist}

We maintain two linked lists with respect to a permutation $\pi$: $\PLIST$, representing the input intervals given by $P$, and $\QLIST$, representing the output intervals given by $Q$. Mappings between input and output intervals are described by $\tau$ and $\tau^{-1}$ pointers. Finally, we maintain cross-list predecessor pointers to resolve predecessor queries in constant time.~\Cref{fig:lists_diagram} visualizes the lists.

\begin{definition}\label{def:lists}
    The linked lists $\PLIST$ and $\QLIST$ are initialized to $r$ nodes with respect to $P$ and $Q$ ($|P|=|Q|=r$) such that the \xth{j} node $p\in\PLIST$ contains:
    \begin{itemize}[noitemsep, topsep=2pt, leftmargin=1.5em]
        \item $p.\text{next}$ pointing to the node corresponding to the \xth{(j+1)} input interval,
        \item $p.\text{idx} = P[j]$,
        \item $p.\tau$ pointing to the node corresponding to $\tau(j)$ in $\QLIST$,
        \item $p.\pred_Q$ pointing to the node corresponding to the output interval $k = Q.\rank(p.\text{idx})$ in $\QLIST$,
    \end{itemize}
    and the \xth{k} node $q \in \QLIST$ contains:
    \begin{itemize}[noitemsep, topsep=2pt, leftmargin=1.5em]
        \item $q.\text{next}$ pointing to the node corresponding to the \xth{(k+1)} output interval,
        \item $q.\text{idx} = Q[k]$,
        \item $q.\tau^{-1}$ pointing to the node corresponding to $\tau^{-1}(k)$ in $\PLIST$,
        \item $q.\pred_P$ pointing to the node corresponding to input interval $j = P.\rank(q.\text{idx})$ in $\PLIST$.
    \end{itemize}
\end{definition}

\begin{figure*}[htbp]
    \centering
    \includegraphics[width=2\columnwidth]{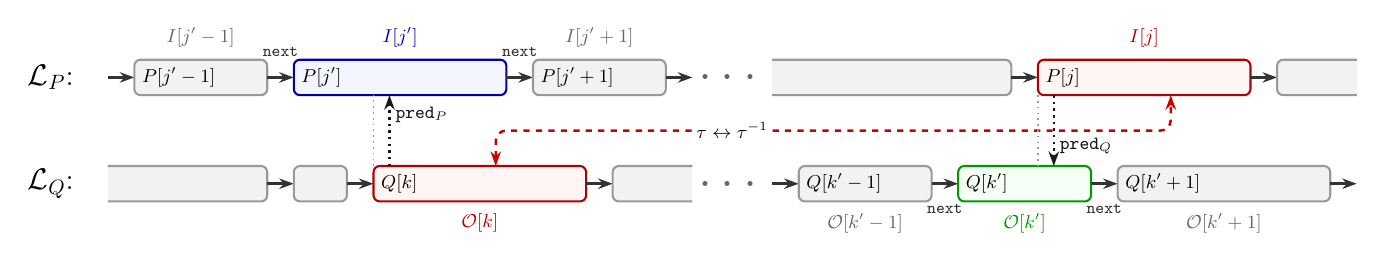}
    \caption{Visual representation of the linked lists $\PLIST$ and $\QLIST$, including next, $\tau$, and predecessor pointers as defined in \Cref{def:lists}.}
    \label{fig:lists_diagram}
\end{figure*}

First, we show that $\PLIST$ and $\QLIST$ can be initialized from $P, P_\pi,$ and $\tau$ in $O(r)$-time and space. Note that this is exactly the input specified in Theorem~\ref{thm:oldbalance}.

\begin{lemma}
\label{lem:init}
    Given $P$, $P_\pi$ and $\tau$, $\PLIST$ and $\QLIST$ can be initialized in $O(r)$-time and space.
\end{lemma}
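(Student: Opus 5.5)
We build the two lists by direct allocation and then fill in the predecessor pointers with one merge-style scan over the sorted sequences.

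\textbf{Step 1: allocate nodes.} Allocate arrays of $r$ nodes for $\PLIST$ and for $\QLIST$, so the \xth{j} node of each list is addressable by index in $O(1)$-time. For each $j$, set the $\PLIST$ node's $\text{idx}=P[j]$ and its next pointer to node $j+1$. This costs $O(r)$ total.

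\textbf{Step 2: recover $Q$ and the $\tau$ pointers.} Since $P_\pi[j]=Q[\tau(j)]$, we avoid sorting entirely. For each $j$, we
\begin{itemize}[noitemsep, topsep=2pt, leftmargin=1.5em]
    \item write $Q[\tau(j)]=P_\pi[j]$ directly into the $\text{idx}$ field of node $\tau(j)$ of $\QLIST$;
    \item set $p_j.\tau$ to node $\tau(j)$;
    \item set $q_{\tau(j)}.\tau^{-1}$ to node $j$.
\end{itemize}
Linking the next pointers of $\QLIST$ in index order then yields $Q$ sorted, because $\tau$ is by definition the sorting permutation. All of this is $O(r)$.

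\textbf{Step 3: predecessor pointers.} This is the only step needing care, since a naive approach would use predecessor search. Both $P$ and $Q$ are now sorted, with $P[0]=Q[0]=0$: the value $0=\pi(i)$ for some $i$, and $\pi(i)$ must be the start of an output interval. We merge the two sorted sequences with two indices $a$ (into $P$) and $b$ (into $Q$), both starting at $0$.
\begin{itemize}[noitemsep, topsep=2pt, leftmargin=1.5em]
    \item To set $p_a.\pred_Q$, advance $b$ while $b+1<r$ and $Q[b+1]\le P[a]$, then point to node $b$. This node is exactly $Q.\rank(P[a])$.
    \item Do the symmetric scan to set each $q_b.\pred_P$ to node $P.\rank(Q[b])$.
\end{itemize}
Each index only moves forward, so each scan takes $O(r+r)=O(r)$ time.

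The only subtlety is confirming that the predecessor is always well defined, which follows from $0$ lying in both $P$ and $Q$. Total time and space are $O(r)$.
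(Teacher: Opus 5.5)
Your proof is correct and takes essentially the same approach as the paper. Writing $P_\pi[j]$ into slot $\tau(j)$ is equivalent to the paper's building $\tau^{-1}$ and then setting $Q[k]=P_\pi[\tau^{-1}(k)]$, and your monotone merge is the paper's two-finger walk for the cross-list predecessor pointers. The only detail you leave implicit is a sentinel with $\text{idx}=n$ (i.e., $P[r]=Q[r]=n$), which is needed so that $\text{next}.\text{idx}$ is defined for the last node of each list.
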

\begin{proof}
    Construct $\tau^{-1}$ by setting $\tau^{-1}[\tau[i]]=i$ for $i\in[0,r)$. Then, compute $Q$ by setting $Q[k]=P_\pi[\tau^{-1}(k)]$ for all $k\in[0,r)$ and $Q[r]=P[r]=n$. We then initialize $\PLIST$ and $\QLIST$ by copying associated information from $P$ and $Q$, using $O(r)$-space random access arrays to set $\tau$ and $\tau^{-1}$ pointers. Finally, we compute the pointers $p.\text{pred}_Q$ and $q.\text{pred}_P$ using a two-finger walk, iterating through both lists simultaneously and selecting the interval with the minimum starting position at each step to derive each node's respective predecessor. Overall, these steps take $O(r)$-time and space
\end{proof}

\subsection{Balancing Algorithm}
\label{sec:algorithm}

We adapt the balance-on-the-fly approach of Bertram, Fischer, and Nalbach~\cite{bertram2024move}, scanning intervals left to right. We balance heavy intervals as we encounter them during this scan, or immediately when an insertion makes an interval heavy within an already-balanced region. For notational convenience, we assume $P=\{p.\text{idx} \mid p \in \PLIST\}$ and $Q=\{q.\text{idx} \mid q \in \QLIST\}$ stay up to date with insertions, though our algorithm does not maintain them explicitly. We maintain a \textit{balanced-up-to} parameter $t\in[0,n]$, which is initialized to $0$ and strictly increases after each iteration. By balancing $\PLIST$ and $\QLIST$ simultaneously from left to right, we enforce the invariant that no interval starting before $t$ remains heavy up to either $t$ or the interval endpoint:

\begin{invariant}\label[invariant]{inv:balance}
    Let $t$ be the balanced-up-to parameter. For every $p \in \PLIST$ with $p.\text{idx} < t$, we maintain $|\left(p.\text{idx},\min(t,p.\text{next}.\text{idx})\right)\cap Q|< 2 \alpha$. Symmetrically, for every $q \in \QLIST$ with $q.\text{idx} < t$, we maintain $|\left(q.\text{idx},\min(t,q.\text{next}.\text{idx})\right) \cap P| < 2 \alpha$.
\end{invariant}

We maintain pointers $p_t \in \PLIST$ and $q_t \in \QLIST$ to the input and output intervals in which the current position $t$ is contained. In each iteration, we select whichever of $p_t$ or $q_t$ has the smaller starting position, breaking ties by choosing the longer interval. If their lengths are identical, they are balanced by definition, and we advance both pointers. Otherwise, if the selected interval is heavy, we initiate a balancing step: we insert a new boundary into the current list, apply the corresponding mapping ($\tau$ or $\tau^{-1}$) to insert into the opposite list, and recurse if necessary as in Section~\ref{sec:ogbalance}. We detail the updates to $t$ in a following section, but note that $t$ strictly increases upon completing an iteration.

Synchronizing construction in both directions ensures that whenever an insertion via $\tau$ or $\tau^{-1}$ falls at a position $x' < t$, we can maintain cross-list predecessor pointers in $O(1)$-time. Conversely, when inserting at a position $x' \geq t$, we maintain all pointers except predecessors. However, our left-to-right scan naturally repairs these missing links as we reach them: because updating $p_t$ and $q_t$ requires traversing \texttt{next} pointers, we recover predecessor information on the fly, analogous to the initialization in Lemma~\ref{lem:init}. Thus, we maintain the following structural invariant:

\begin{invariant}\label[invariant]{inv:pred}
    Let $t$ be the balanced-up-to parameter. For every $p \in \PLIST$ with $p.\text{idx} \leq t$, the predecessor pointer $p.\pred_Q$ correctly references the predecessor of $p.\text{idx}$ in $\QLIST$. Symmetrically, for every $q \in \QLIST$ with $q.\text{idx} \leq t$, the pointer $q.\pred_P$ correctly references the predecessor of $q.\text{idx}$ in $\PLIST$.
\end{invariant}

We now show that our linked lists support the balancing steps of Section~\ref{sec:ogbalance} while preserving all $\PLIST$ and $\QLIST$ pointers alongside our invariants. Assume w.l.o.g.\ that in the current iteration we choose to balance $q_t \in \QLIST$ (the symmetric argument holds when selecting $p_t \in \PLIST$) and that the invariants hold up to $t$. First, we demonstrate how to execute the balancing operation and any required recursive steps in $O(1)$-time per insertion while preserving the invariants for the current $t$. Then, we show how to advance $t$, $q_t$, and $p_t$ to restore predecessor information and re-establish the invariants for the next iteration.

\subsubsection{Balancing Step}

A \textit{balancing step} is initiated when we encounter a heavy interval during our scan, inserting a new boundary into $\PLIST$ and $\QLIST$. This initial insertion may trigger further recursive insertions if it causes other intervals to become heavy at positions less than $t$. We show that this entire process preserves all structural invariants and requires only $O(\alpha)$-time per inserted interval.~\Cref{fig:balanced_up_to} illustrates how cross-list predecessor pointers are updated.

\begin{figure}[htbp]
    \centering
    \includegraphics[width=\columnwidth]{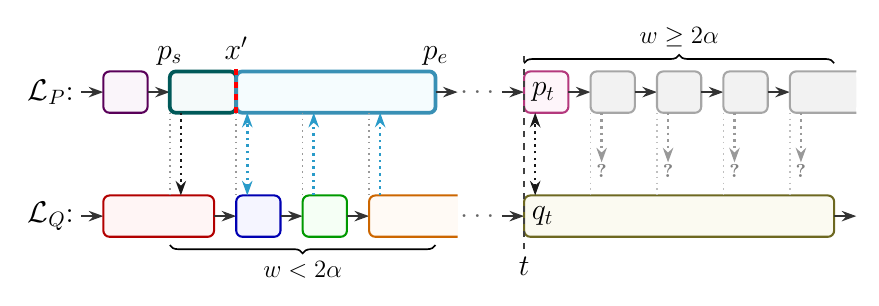} 
    \caption{List update for $\alpha=2$. Because the insertion at $x'$ falls within the already-balanced region ($x' < t$), we safely traverse valid predecessor pointers to repair cross-list routing for the newly split interval $[x', p_e)$ in $O(\alpha)$-time. Conversely, intervals beyond the balanced-up-to parameter $t$ may remain heavy with unreliably set predecessor pointers. The next iteration selects $q_t$ to balance.}
\label{fig:balanced_up_to}
\end{figure}

\begin{lemma}\label{lem:balancing_step}
    Suppose~\Cref{inv:balance,inv:pred} hold up to $t$. Then performing a balancing step on an interval $q_t \in \QLIST$ (or symmetrically, $p_t \in \PLIST$) preserves both invariants up to $t$ and completes in $O(\alpha)=O(1)$-time per inserted interval.
\end{lemma}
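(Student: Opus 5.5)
We decompose a balancing step into a sequence of \emph{elementary insertions}: one split in $\QLIST$ followed by the matching split in $\PLIST$ obtained through $\tau^{-1}$ (or symmetrically, $\PLIST$ then $\QLIST$ through $\tau$). We then argue by induction over this sequence. The induction hypothesis is that \Cref{inv:balance,inv:pred} hold up to $t$ before each elementary insertion, and that the only interval that can be heavy below $t$ is the one created by the most recent insertion.

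\textbf{Locating the split point.} Since $q_t$ is heavy, we must find the $(\alpha+1)$-th element $x$ of $(q_t.\text{idx},q_t.\text{next}.\text{idx})\cap P$. By \Cref{inv:pred}, $q_t.\pred_P$ is correct because $q_t.\text{idx}\le t$. Starting from that node, we walk $\alpha+1$ \texttt{next} pointers in $\PLIST$ to reach $x$, which takes $O(\alpha)$ time. We insert a new node after $q_t$ with index $x$, and its $\pred_P$ is the node holding $x$. Following $q_t.\tau^{-1}$ gives $p_s$, and we compute $x'=p_s.\text{idx}+(x-q_t.\text{idx})$. We insert the node for $x'$ after $p_s$ in $\PLIST$ and set its $\tau$/$\tau^{-1}$ pointers to the two new nodes. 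These insertions are constant-time splices in the linked lists.

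\textbf{Repairing predecessor pointers.} Nodes whose predecessor may have changed are of two kinds: nodes of $\PLIST$ lying in $(x,q_t.\text{next}.\text{idx})$, and nodes of $\QLIST$ lying in $(x',p_s.\text{next}.\text{idx})$. For the first kind, since $q_t$'s weight is bounded by the heaviness threshold, at most about $2\alpha$ of them exist and are at most $O(\alpha)$ steps from $x$. For the second kind, we split on the position of $x'$.

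If $x'\ge t$, \Cref{inv:pred} imposes no constraint at positions beyond $t$. The node's $\pred_Q$ is left unset, to be repaired later by the scan, and no recursion is required.

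If $x'<t$, we compute the new node's $\pred_Q$ from $p_s.\pred_Q$ by walking forward in $\QLIST$. We also redirect $\pred_P$ for every $Q$-node in $(x',\min(t,p_s.\text{next}.\text{idx}))$. By \Cref{inv:balance} applied to $p_s$, there are fewer than $2\alpha$ such nodes, so both the walk and the redirection take $O(\alpha)$ time.

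\textbf{Recursion and the main obstacle.} Inserting $x'$ into $P$ can raise the weight of the output interval $q'$ containing $x'$, namely $q'=$ the predecessor we just computed. When $x'<t$, this raises the weight of $q'$ only within the region below $t$, so \Cref{inv:balance} can fail only at $q'$ and only by reaching $2\alpha$. In that case we recurse on $q'$ exactly as above, and the counting argument from Theorem~\ref{thm:nt} bounds the total number of recursive insertions. Splitting $q'$ at its $(\alpha+1)$-th $P$-element yields two parts of weight below $2\alpha$, so after each split only the single newly affected interval can be heavy, and the induction hypothesis is restored.

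We expect the main obstacle to be the boundary situations around $t$. One case is an interval straddling $t$ whose count is truncated at $\min(t,\cdot)$. Another is verifying that the pointers walked really are correct, including nodes with index exactly $t$, which \Cref{inv:pred} covers by its non-strict inequality. For these cases we will check that every walk begins from a node of index at most $t$ and only traverses \texttt{next} pointers, which are always maintained. Since each elementary insertion costs $O(\alpha)$, the bound of $O(\alpha)=O(1)$ time per inserted interval follows.
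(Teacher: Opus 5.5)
Your proposal follows the same route as the paper's proof. You do a bounded scan from $q_t.\pred_P$ to locate $x$, splice in $q_x$ and $p_{x'}$ and link them through $\tau^{-1}$, and stop when $x'\ge t$. Otherwise you use \Cref{inv:pred} at $p.\pred_Q$ and \Cref{inv:balance} to set and redirect predecessor pointers in $O(\alpha)$ time, then recurse on the single output interval whose interior received $x'$.

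Two of your additions are more careful than the paper's argument. First, you also redirect the $\pred_Q$ pointers of $\PLIST$ nodes after $x$ enters $\QLIST$; the paper's proof never spells this out, even though recursive splits happen below $t$. Second, you truncate the $\QLIST$ repair at $\min(t,p.\text{next}.\text{idx})$, whereas the paper uses a bound on $|(p_s,p_e)\cap Q|$ that \Cref{inv:balance} only guarantees below $t$.

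The step that fails as written is the cost of your first repair. You bound the number of $\PLIST$ nodes in $(x,q_t.\text{next}.\text{idx})$ by saying that $q_t$'s weight is bounded by the heaviness threshold. But in the initial split, $q_t$ is selected precisely because its weight is at least $2\alpha$. That weight can be $\Theta(r)$, for example right after initialization, so redirecting all of those nodes is not $O(\alpha)$.

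The fix is the truncation you already use on the other side. Only nodes with index at most $t$ must be correct under \Cref{inv:pred}. At most $O(\alpha)$ of them lie in $[x,t]$, because the weight of the split interval below $t$ is at most $2\alpha$: by \Cref{inv:balance} for $q_t$, and by the recursion trigger in a recursive call. The remaining nodes are left stale for the later scan, like every other node beyond $t$.

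While fixing this, make both repair ranges closed on the left:
\begin{itemize}
\item the $\PLIST$ node holding $x$ must now point to $q_x$;
\item if $x'$ already lies in $Q$, the $\QLIST$ node at $x'$ must now point to $p_{x'}$ (the paper's range $x'\le q'.\text{idx}<p_e$ includes it).
\end{itemize}
Similarly, when $x'=t$ the new node $p_{x'}$ (and the $\QLIST$ node at $x'$, if any) falls under \Cref{inv:pred}. So your case $x'\ge t$ must still set these pointers, which takes $O(\alpha)$ time, although no recursion is needed there. The paper glosses over this boundary in the same way.
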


\begin{proof}
Assume w.l.o.g. that we balance $q_t$. Let $q_s=q_t.\text{idx}$ and $q_e=q_t.\text{next}.\text{idx}$. We compute the weight $|(q_s,q_e)\cap P|$ by accessing $p'=q_t.\pred_P$ and advancing $p'=p'.\text{next}$ until $p'.\text{idx} \geq q_e$ or the weight exceeds $2\alpha$; this takes $O(\alpha)$-time. If the output interval is heavy, we recover the $(\alpha + 1)$-th element $x$ in $(q_s, q_e) \cap P$ during this scan. Then, we insert node $q_x$ corresponding to $[x,q_e)$ into $\QLIST$ and update next pointers while setting $q_x.\text{idx}=x$. Let $p=q_t.\tau^{-1}$ with $p_s=p.\text{idx}$, $p_e=p.\text{next}.\text{idx}$, and $x'=p_s+(x-q_s)$. Symmetrically, we insert a node $p_{x'}$ with index $x'$ into $\PLIST$ for $[x',p_e)$. Finally, we link the two new nodes by setting $q_x.\tau^{-1}=p_{x'}$ and $p_{x'}.\tau=q_x$. All pointer and index updates take $O(1)$-time in total.

If $x' \geq t$, the balancing step terminates and Invariant~\ref{inv:balance} holds. If $x'<t$, we must repair cross-list predecessor pointers and verify whether further balancing is required. Because $p_s \leq x' < t$, Invariant~\ref{inv:pred} guarantees that the pointer $q'=p.\pred_Q$ can be used. By Invariant~\ref{inv:balance}, $|(p_s,p_e) \cap Q| \leq 2\alpha$. Thus, we can advance $q'=q'.\text{next}$ in $O(\alpha)$-time until we reach the node with the largest $q'.\text{idx} \leq x'$, after which we set $p_{x'}.\pred_Q=q'$. Continuing this scan through all remaining nodes in the span $x' \leq q'.\text{idx} < p_e$, we update their predecessor pointers to $q'.\pred_P=p_{x'}$. Because the total number of nodes in this span is bounded by $2\alpha$, this pointer-repair process is $O(\alpha)$-time.

Finally, because inserting $x'$ into $\PLIST$ increases the weight of exactly one output interval, the interval $q^*= p_{x'}.\pred_Q$ spanning $x'$, we can check if $q^*$ became heavy in $O(\alpha)$-time. If heavy and $x' < t$, we recursively balance $q^*$. Because every recursive step inserts an interval to $\PLIST$ and $\QLIST$ while performing identical $O(\alpha)$-time operations as detailed above, the total work remains $O(\alpha)=O(1)$-time per inserted interval while maintaining~\Cref{inv:balance,inv:pred} up to $t$.
\end{proof}

\subsubsection{Update Step}\label{sec:update}

We now show how to advance $t$, $p_t$, and $q_t$ while preserving all invariants. 

\begin{lemma}\label{lem:update_step}
    Suppose Invariants~\ref{inv:balance} and \ref{inv:pred} hold up to $t$. After each balancing step iteration, we can advance $t$, $q_t$, and $p_t$ to preserve both invariants for the new value of $t$ in time proportional to the number of list nodes traversed.
\end{lemma}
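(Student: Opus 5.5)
I would first fix how $t$ advances, then check each invariant for the new value $t'$, and finally bound the cost by counting node traversals. Assume w.l.o.g.\ that the iteration selected $q_t$ (so $q_t.\text{idx} \le p_t.\text{idx}$, with ties broken toward the longer interval), and that after Lemma~\ref{lem:balancing_step} the interval $q_t=[q_s,q_e)$ is no longer heavy. Each balancing step places the split at the $(\alpha+1)$-th element of $(q_s,q_e)\cap P$, so the remaining $[q_s,q_e)$ contains fewer than $2\alpha$ elements of $P$. I would set $t' = q_e = q_t.\text{next}.\text{idx}$ and replace $q_t$ by $q_t.\text{next}$. This gives $t'>t$ because $t \in [q_s,q_e)$. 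If lengths tied, both intervals end at the same point and both pointers advance.

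Next I advance $p_t$. Starting from the current $p_t$, I repeatedly follow $p_t.\text{next}$ while $p_t.\text{next}.\text{idx} \le t'$. The walk is exactly the two-finger merge from Lemma~\ref{lem:init}, restricted to the window $[t,t']$. For every $\PLIST$ node $p$ passed with $p.\text{idx}\in(t,t']$, I set $p.\pred_Q$ to the current $\QLIST$ finger, which is $q_t$ (or $q_t.\text{next}$ when $p.\text{idx}=t'$). For every $\QLIST$ node whose index lies in $(t,t']$, which is just the new $q_t$ if its index equals $t'$, I set $\pred_P$ to the last $\PLIST$ node passed. These values are correct predecessors because no other $Q$ boundary lies in $(q_s,q_e)$. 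This restores Invariant~\ref{inv:pred} for all indices $\le t'$. Nodes inserted earlier at positions $\ge t$ by Lemma~\ref{lem:balancing_step} were already linked by next pointers, so the walk visits them as well.

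The main obstacle is Invariant~\ref{inv:balance} for $t'$. For $q\in\QLIST$ with $q.\text{idx}<t'$, the new nodes are the $q_t$ just balanced, whose count is below $2\alpha$ as argued, and possibly nothing else. For $p\in\PLIST$ with $p.\text{idx}<t'$, the window $(p.\text{idx},\min(t',p.\text{next}.\text{idx}))$ grows only within $(t,t')\subseteq(q_s,q_e)$, which contains no $Q$ boundary. So the count is unchanged for old $p$ and zero for the newly reached $p$ in $(t,t')$. This is where selecting the interval with the smaller start matters, and the tie-breaking toward the longer interval matters when $p_t$ and $q_t$ share a start.

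For the cost, each step of the walk moves a finger forward over one node, and the predecessor writes are $O(1)$ per node visited. The total time is therefore proportional to the number of nodes traversed. Since $t$ only increases, each node is traversed $O(1)$ times overall.
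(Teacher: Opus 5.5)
Your route is the paper's. You move $t$ to the end of the (possibly truncated) selected interval $q_t$, argue that input intervals reached in the new window have zero weight, and repair predecessor pointers with the two-finger walk. The zero-weight argument rests on $q_t$ having no interior $Q$ boundary and being chosen for its smaller start.

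The step that fails as justified is the weight bound for the left piece $[q_s,x)$ in the heavy case. Splitting at the $(\alpha+1)$-th element of $(q_s,q_e)\cap P$ gives exactly $\alpha$ interior $P$ boundaries at the moment of the split. But the same balancing step then inserts the image $x'=p_s+(x-q_s)$ into $\PLIST$, and the recursion in Lemma~\ref{lem:balancing_step} may insert further points. Any of these can land in $[q_s,x)$. Lemma~\ref{lem:balancing_step} does not cover this: it only guarantees balance up to the old $t$ and simply stops when an insertion is $\ge t$, so an insertion in $[t,x)$ is exactly what it ignores. The paper handles this with its case split ($x'\le q_s$, $q_s\le x'<x$, $x'>x$), and this is where $\alpha\ge 2$ is needed in this lemma.

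To close the gap, follow the cascade of $\PLIST$ insertions.
\begin{itemize}
\item If an insertion lands at $y\ge t$, the cascade stops.
\item If it lands at $y\in[q_s,t)$, its output interval $\pred_Q(y)$ is the truncated $q_t$ itself. That interval then has weight at most $\alpha+1<2\alpha$, so it is not heavy and the cascade stops.
\item Otherwise $y<q_s$, and the heavy interval being split lies entirely left of $q_s$, so its new $Q$ boundary is also left of $q_s$.
\end{itemize}
Hence at most one $P$ insertion enters $[q_s,x)$, giving weight at most $\alpha+1<2\alpha$. Also, no new $Q$ boundary enters $(q_s,x)$, which your claim that ``no other $Q$ boundary lies in $(q_s,q_e)$'' silently relies on for both the $\PLIST$ side and the predecessor walk.

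Two smaller points. First, an old input interval's window grows by $[t,t')$, not $(t,t')$, and $t$ itself is a $Q$ boundary when $q_s=t$. You need the selection rule explicitly there: $p_t.\text{idx}\ge q_s=t$, so no old input interval straddles $t$. Second, in the heavy case $t'=x$, so ``$t'>t$ because $t\in[q_s,q_e)$'' presupposes $x>t$. That is not automatic when $q_s<t$; the paper also takes it for granted.
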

\begin{proof}
Assume w.l.o.g. that we selected $q_t$ for balancing, and that it was heavy. Let $x$ be the position of the initial insertion into $\QLIST$ and $x'$ the position of the terminating insertion into $\PLIST$. Consider setting $t=x$. Let $q_s=q_t.\text{idx}$ and $q_e=q_t.\text{next}.\text{idx}$. If $q_s \leq x' < x$, meaning balancing terminated with a split inside the newly created span $[q_s,x)$, then $[q_s,x)$ can have weight at most $\alpha + 1 < 2\alpha$ (since $\alpha\geq2$). Thus, it is not heavy, and balance is maintained up to $x$. If $x' \leq q_s$, balance is maintained by Lemma~\ref{lem:balancing_step}; if $x' > x$, balance up to $x$ is trivially maintained. In all cases, setting $t=x$ preserves Invariant~\ref{inv:balance} in $\QLIST$.

We now verify that Invariant~\ref{inv:balance} is also preserved in $\PLIST$ up to $t=x$. If $q_t.\text{idx}<p_t.\text{idx}$, any input interval $p\in\PLIST$ with $p.\text{idx} \in [q_s,x)$ must be balanced up to $x$: the weight of $p$ within this span is zero since it is wholly contained within $[q_s,x)$. If $q_t.\text{idx}=p_t.\text{idx}$ and $q_t.\text{next}.\text{idx}>p_t.\text{next}.\text{idx}$, identical reasoning holds. The case where both starting and ending positions match cannot occur, as $q_t$ would already be balanced by definition and not selected for a balancing step. Hence, $\PLIST$ also satisfies Invariant~\ref{inv:balance} up to $t=x$.

If $q_t$ was not heavy, we simply set $t=q_t.\text{next}.\text{idx}=q_e$. If $q_t.\text{idx}<p_t.\text{idx}$, any input interval $p\in\PLIST$ with $p.\text{idx} \in [q_s,q_e)$ must be balanced up to $q_e$: either it lies entirely within $q_t$ (yielding zero weight), or it is the unique interval intersecting $q_e$, which must have zero weight up to position $t=q_e$. The same reasoning applies when $q_t.\text{idx}=p_t.\text{idx}$ and $q_t.\text{next}.\text{idx} > p_t.\text{next}.\text{idx}$, while the case where both boundaries match is balanced by definition. Thus, Invariant~\ref{inv:balance} is preserved across all cases. 

Finally, regardless of whether a balancing step occurred, we advance $q_t$ and $p_t$ via their \texttt{next} pointers until we find the nodes whose intervals contain $t$. By using a two-finger walking approach that advances whichever pointer has the smaller starting position, cross-list predecessor pointers can be verified and updated concurrently during this scan. Once the new $q_t$ and $p_t$ are established, Invariant~\ref{inv:pred} is restored for the updated $t$. The time required to advance the pointers is strictly linear in the number of list nodes traversed.
\end{proof}

\subsection{Final Result}

The correctness of our algorithm follows directly by induction from our lemmas. Because the parameter $t$ initializes at $0$ and advances over interval starting positions to $n$, repeatedly applying Lemmas~\ref{lem:balancing_step} and \ref{lem:update_step} guarantees that Invariants~\ref{inv:balance} and \ref{inv:pred} hold upon termination; thus, the final lists represent balanced intervals with respect to $\pi$ and $\pi^{-1}$. We now evaluate the runtime and space requirements.

\begin{lemma}
\label{lem:runtime}
    Let $r'=|\PLIST|=|\QLIST|$ be the number of intervals after the algorithm finishes with $t=n$. Then the overall procedure completes in $O(r')$-time and $O(r')$-space for any balancing parameter $\alpha \geq 2$.
\end{lemma}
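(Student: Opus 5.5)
We split the total cost into three parts: initialization, balancing steps, and pointer advancement in update steps. We then show each part is $O(r')$. Space is immediate: each node of $\PLIST$ and $\QLIST$ stores a constant number of words, and both lists have exactly $r'$ nodes at termination. Auxiliary storage is $O(1)$ beyond the $O(r)$ arrays used in Lemma~\ref{lem:init}, and $r\le r'$. For time, Lemma~\ref{lem:init} gives $O(r)=O(r')$ for initialization.

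For the balancing work, Lemma~\ref{lem:balancing_step} charges $O(\alpha)=O(1)$ to each inserted interval, including the recursive insertions. Each insertion adds exactly one node to each list, so at most $r'-r$ insertions occur in total. The work is therefore $O(r'-r)$. There is one subtlety: an iteration that selects an interval which turns out not to be heavy still pays for the weight scan, which stops after at most $2\alpha$ nodes. We charge this $O(\alpha)$ to the iteration itself. It remains to show that the number of iterations is $O(r')$.

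To bound the iterations, we use the fact that $t$ strictly increases and, by Lemma~\ref{lem:update_step}, always lands on a boundary: either a newly inserted $x$ or $q_e$ (resp.\ $p_e$), which is an existing element of $Q\cup P$. Each iteration therefore advances $t$ past at least one element of the final boundary set $P\cup Q$. That set has at most $2r'+1$ values, so there are $O(r')$ iterations.

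The step I expect to be the main obstacle is bounding the pointer advancement in Lemma~\ref{lem:update_step}, whose cost is "proportional to the number of list nodes traversed". Here we argue that $p_t$ and $q_t$ only move forward along \texttt{next} pointers, and each node is passed at most once after it exists, since the pointers never move backward. The potential issue is that insertions at $x'<t$ create nodes behind the fingers, which the fingers never revisit. Those nodes are paid for by Lemma~\ref{lem:balancing_step}'s predecessor repair and not by the scan. Insertions at $x'\ge t$ create nodes ahead of the fingers, each of which is traversed once later. Hence the total traversal is at most $|\PLIST|+|\QLIST| = 2r'$ node visits. The concurrent predecessor updates in the two-finger walk are $O(1)$ per visit, as in Lemma~\ref{lem:init}.

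Summing the three parts gives $O(r)+O(r'-r)+O(r')=O(r')$ time. To finish, I would invoke Theorem~\ref{thm:nt} to note that $r'\le \frac{\alpha r}{\alpha-1}$ for $\alpha\ge2$, which yields $O(r)$ overall. This assumes the left-to-right split choice obeys the same charging bound as in that theorem, which I would justify by noting that each split still separates $\alpha$ boundaries on its left side.
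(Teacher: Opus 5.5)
Your proof is correct and follows the paper's argument. You charge $O(\alpha)$ to each of the $r'-r$ insertions via Lemma~\ref{lem:balancing_step}, and you bound the forward-only two-finger walk of Lemma~\ref{lem:update_step} by $O(r')$ node visits. Your explicit iteration count is a useful addition: each iteration moves $t$ strictly forward onto a distinct element of the final boundary set, so there are $O(r')$ iterations. This accounts for the $O(\alpha)$ weight check performed when the selected interval turns out not to be heavy, which the paper's proof leaves implicit.

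Your final paragraph is not needed, since the lemma only claims $O(r')$. As written it is also not justified. Theorem~\ref{thm:nt} bounds the splits made when balancing a single direction, but this algorithm also splits heavy input intervals to balance $\pi^{-1}$. So $r'\le\frac{\alpha r}{\alpha-1}$ does not follow from it, and your ``$\alpha$ boundaries on the left'' charging does not cover those cross-direction splits. The paper treats this separately in Lemma~\ref{lem:steps}. There it shows that a split made to balance one list never increases the weight of any interval in the other list. This yields at most $\frac{2r}{\alpha-1}$ new intervals, i.e.\ $r'\le\frac{(\alpha+1)r}{\alpha-1}$.
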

\begin{proof}
    Each balancing insertion takes $O(\alpha)$-time by Lemma~\ref{lem:balancing_step}. The total number of insertions, both original and recursive, across the entire execution is at most $O(r')$ since each introduces exactly one new interval to both $\PLIST$ and $\QLIST$. Furthermore, by Lemma~\ref{lem:update_step}, updating $p_t$ and $q_t$ requires time linear in the number of nodes advanced; because pointers never move backward, this two-finger walk accesses at most $O(r')$ nodes in total across all iterations. Because cross-list predecessor pointers are updated concurrently during these scans, they add no asymptotic overhead. Finally, because space is strictly proportional to the number of nodes in $\PLIST$ and $\QLIST$, the entire procedure requires $O(r')$-time and space.
\end{proof}

It remains to show that the final number of intervals satisfies $r'\in O(r)$. Crucially, we observe that balancing an interval in $Q$ cannot create a heavy interval in $P$, and vice versa.

\begin{lemma}
    \label{lem:steps}
    When balancing $\pi$ and $\pi^{-1}$ with $r$ initial input intervals simultaneously via the algorithm of~\Cref{sec:algorithm} for any $\alpha \geq 2$, the procedure creates at most $\frac{2r}{\alpha-1}$ new intervals. Thus, upon completion, the final number of intervals is bounded by $r' \leq \frac{(\alpha+1)r}{\alpha-1}$.
\end{lemma}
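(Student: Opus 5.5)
We bound the number of splits with a potential argument modeled on the proof of \Cref{thm:nt}, applied twice: once for the heavy intervals of $\QLIST$ (balancing $\pi$) and once, symmetrically, for the heavy intervals of $\PLIST$ (balancing $\pi^{-1}$). The key structural fact is the observation just before the statement. A split of a heavy output interval $[q_s,q_e)$ at $x$ inserts $x$ into $Q$ and the mirrored $x'$ into $P$. The new element of $Q$ lands inside some input interval, but it adds only to $P$-interval weights measured against $Q$. Moreover, $x$ is already an element of $P$, so the split does not increase any $P$-interval's count $|(p_s,p_e)\cap Q|$ beyond what the existing boundary structure permits. Concretely, $x$ coincides with an existing $P$ boundary, so no input interval gains $x$ in its interior. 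The inserted $x'$ lies in the interior of at most one output interval and raises that weight by one. Symmetrically, a split of a heavy $P$-interval creates no new interior $P$-element for any output interval. So each split is \emph{charged} to exactly one side, $Q$-type or $P$-type, and the two families of splits interact only through the ``$x'$'' insertions on the same side.

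I would then run the classical charging argument on each side separately. Consider $Q$-type splits, and assign each element of $P$ a unit of credit. When a heavy output interval of weight $w\ge 2\alpha$ is split at its $(\alpha+1)$-th interior element, the $\alpha$ interior elements strictly to the left of $x$, together with $x$ itself, now lie in a light interval $[q_s,x)$ of weight $\alpha$. The step consumes the credit of those $\alpha$ elements, which are never again interior to a heavy output interval (this needs \Cref{inv:balance}: splits only refine intervals to the left of $t$). It creates one new $P$ element $x'$ carrying fresh credit. Net credit drops by $\alpha-1$ per $Q$-type split. Since credit starts at $r$ (the original $P$-elements), the number $s_Q$ of $Q$-type splits satisfies $s_Q(\alpha-1)\le r$, hence $s_Q\le r/(\alpha-1)$. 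The same argument with roles exchanged gives $s_P\le r/(\alpha-1)$. Summing, at most $2r/(\alpha-1)$ new intervals are created, so $r'\le r+2r/(\alpha-1)=(\alpha+1)r/(\alpha-1)$.

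The main obstacle is making the charging rigorous in the presence of the opposite-side insertions. A $P$-type split inserts a new element into $Q$; it could a priori split an output interval and change which $P$-elements count as ``consumed''. Worse, an element it inserts into $P$ (as the mirror $x'$ in the $Q$-list) could appear as interior credit for $Q$-type splits. I would first try to show that each $P$-type split inserts into $P$ an element that is a boundary of an output interval. Such an element is then not interior to any output interval, so it carries no credit for the $Q$ side. Symmetrically, each $Q$-type split inserts into $Q$ an element that is already in $P$. For the credit units that were already consumed, I also need to check that they remain in light intervals under these refinements; this should follow because refining an interval only decreases interior counts. If this clean separation fails, the fallback is a joint potential $\Phi=$ (interior $P$-elements of output intervals) $+$ (interior $Q$-elements of input intervals). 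I would show each split decreases $\Phi$ by at least $\alpha-1$ while the initial value is at most $2r$, which again yields the $2r/(\alpha-1)$ bound.
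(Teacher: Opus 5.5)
\textbf{Where your proof breaks.} Your decomposition is right and matches the paper's key observation. A $Q$-type split adds to $Q$ only the element $x$, which is already in $P$, and adds $x'$ to $P$, which merely refines an input interval. So it never raises any weight $|(p_s,p_e)\cap Q|$, and symmetrically for $P$-type splits. The paper stops here and applies Theorem~\ref{thm:nt} to each side.

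The gap is in how you then count the $Q$-type splits. Your charging needs the claim that the $\alpha$ elements consumed when $[q_s,q_e)$ is split are never again interior to a heavy output interval. This need not hold. Mirrored insertions $x'$ from later $Q$-type splits can land to the left of $t$, inside the light piece $[q_s,x)$. The recursive case of Lemma~\ref{lem:balancing_step} exists precisely because such an interval can be pushed back to weight $2\alpha$ and split again. At that re-split, some of the $\alpha$ elements you ``consume'' were already consumed, so your net loss of $\alpha-1$ credits per split is unjustified. Invariant~\ref{inv:balance} says intervals left of $t$ are kept light by re-splitting, not that they are only refined.

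Your fallback fails too. The number of $P$-elements interior to output intervals plus $Q$-elements interior to input intervals is exactly $|P\setminus Q|+|Q\setminus P|$. A $Q$-type split removes $x$ from $P\setminus Q$ and either adds $x'$ to $P\setminus Q$ or removes it from $Q\setminus P$. The change is therefore $0$ or $-2$, never $-(\alpha-1)$.

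\textbf{The fix: a thresholded potential.} Let $\Phi_Q=\sum_{O}\max\bigl(0,w(O)-(\alpha-1)\bigr)$, summed over output intervals $O$, where $w(O)$ counts the $P$-elements in the interior of $O$.

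A $Q$-type split of an interval of weight $w\ge 2\alpha$ replaces its contribution $w-\alpha+1$ by $1+(w-2\alpha)$, a drop of $\alpha$. The mirror $x'$ raises at most one weight by one, wherever it lands, including in the new pieces. So $\Phi_Q$ falls by at least $\alpha-1$.

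A $P$-type split adds to $P$ only an element already in $Q$, which is interior to no output interval. Otherwise it only splits one output interval into pieces whose weights sum to at most the original. Since $\max(0,a-c)+\max(0,b-c)\le\max(0,a+b-c)$ for $a,b,c\ge 0$, $\Phi_Q$ does not increase.

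Initially $\Phi_Q\le|P\setminus Q|\le r$, and $\Phi_Q\ge 0$ throughout. Hence there are at most $r/(\alpha-1)$ $Q$-type splits, and symmetrically at most $r/(\alpha-1)$ $P$-type splits. This gives $2r/(\alpha-1)$ new intervals and $r'\le(\alpha+1)r/(\alpha-1)$. The argument is independent of the order of splits.

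This potential is also the content needed to make the paper's appeal to Theorem~\ref{thm:nt} rigorous. That theorem is stated for one-sided balancing, whereas here the two sides' splits are interleaved.
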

\begin{proof}
    Assume w.l.o.g.\ that we balance an arbitrary interval $q\in Q$; we show this operation never increases the weight of any interval in $P$. Balancing $q$ requires inserting a boundary $x$ into $Q$, and a corresponding boundary $x'$ into $P$. By definition, $x \in P$. If $p_e$ is the successor of $x$ in $P$, then $x \notin (x, p_e)$, so the weight of $[x,p_e)$ remains unchanged. Next, inserting $x'$ into $P$ splits some interval $[p_s, p_e)$ of weight $w = |(p_s, p_e) \cap Q|$ into two intervals whose weights sum to at most $w$. Thus, neither split interval has increased weight. By symmetry, the same holds when balancing $P$. Therefore, simultaneous balancing requires no more steps than balancing both $\pi$ and $\pi^{-1}$ independently, which by Theorem~\ref{thm:nt} adds at most $\frac{r}{\alpha-1}$ intervals, yielding at most $\frac{2r}{\alpha-1}$ new intervals in total.
\end{proof}

Combining Lemma~\ref{lem:runtime} and Lemma~\ref{lem:steps}, our procedure executes in linear time and generates $O(r)$ total intervals, yielding our main algorithmic result:

\begin{theorem}
\label{thm:linear}
    Given $P$, $P_\pi$ and $\tau$ for a permutation $\pi$ with $r$ input intervals, then a balanced move structure $\MOVE{\pi}$ can be constructed in optimal $O(r)$-time and space.
\end{theorem}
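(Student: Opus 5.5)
The proof assembles the lemmas already established in this section, in the order the algorithm runs, and then checks that the final lists yield the arrays of a balanced move structure. First, apply Lemma~\ref{lem:init} to build $\PLIST$ and $\QLIST$ from $P$, $P_\pi$ and $\tau$ in $O(r)$-time and space. Then fix a constant $\alpha \geq 2$ (say $\alpha=2$) and run the synchronized left-to-right balancing procedure of Section~\ref{sec:algorithm}.

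Correctness comes from induction on the iterations, with Invariants~\ref{inv:balance} and~\ref{inv:pred} as the hypothesis. Initially $t=0$, so both invariants hold vacuously. Lemma~\ref{lem:balancing_step} shows that every balancing step, including its recursive insertions, preserves the invariants up to the current $t$. Lemma~\ref{lem:update_step} shows that advancing $t$, $p_t$ and $q_t$ re-establishes them for the new $t$. Since $t$ strictly increases over list boundaries, it reaches $n$.

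At $t=n$, Invariant~\ref{inv:balance} gives $|(q.\text{idx}, q.\text{next}.\text{idx}) \cap P| < 2\alpha$ for every $q\in\QLIST$. This is exactly the balancing criterion of Theorem~\ref{thm:nt}, with $P'$ and $Q'$ the final index sets. As a bonus, the same condition holds for $\pi^{-1}$.

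For complexity, Lemma~\ref{lem:runtime} bounds the whole procedure by $O(r')$ time and space, and Lemma~\ref{lem:steps} gives $r' \leq \frac{(\alpha+1)r}{\alpha-1} = O(r)$ for constant $\alpha$. This yields $O(r)$ total.

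The remaining step is to extract $\MOVE{\pi}$, meaning the arrays $P'$, $P'_\pi$ and $\PRANK$, from the lists in $O(r')$ time.
\begin{itemize}
\item \textbf{$P'$ and indices.} Traverse $\PLIST$ once, assigning each node its rank $j$ and writing $P'[j]=p.\text{idx}$. Traverse $\QLIST$ the same way to number its nodes.
\item \textbf{$P'_\pi$.} For each $p$, set $P'_\pi[j] = p.\tau.\text{idx}$. This is valid because every split was applied consistently on both sides, so each pair $(p, p.\tau)$ remains an input interval and its image.
\item \textbf{$\PRANK$.} Set $\PRANK[j]$ to the rank of $p.\tau.\pred_P$. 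This uses the fact that $p.\tau.\text{idx}=\pi(P'[j])$ and that $\pred_P$ is correct for every node by Invariant~\ref{inv:pred} at $t=n$.
\item \textbf{Boundaries.} Append the boundary elements $P'[r']=Q'[r']=n$.
\end{itemize}
Each of these passes is linear, and a move query then scans fewer than $2\alpha$ boundaries, so it runs in $O(1)$-time.

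Optimality is immediate, since the input alone has size $\Theta(r)$.

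I expect the main obstacle to be justifying that $\pred_P$ pointers are valid for \emph{every} node at termination. Nodes inserted at positions $x'\geq t$ were created without predecessor pointers, so I must argue that the final two-finger walk in Lemma~\ref{lem:update_step}, which runs up to $t=n$, visits and repairs all of them. I would handle this by noting that Invariant~\ref{inv:pred} with $t=n$ covers all indices $<n$.

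A secondary point to check is that the bound of Lemma~\ref{lem:steps} counts recursive insertions, not just top-level splits. I would rely on its argument that no insertion increases any opposite-list weight, so the count is dominated by independent balancing of $\pi$ and $\pi^{-1}$ under Theorem~\ref{thm:nt}.
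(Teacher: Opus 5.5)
Your proposal is correct and follows the same route as the paper's proof: initialize the lists via Lemma~\ref{lem:init}, obtain both invariants at $t=n$ by induction using Lemmas~\ref{lem:balancing_step} and~\ref{lem:update_step}, bound time and size by Lemmas~\ref{lem:runtime} and~\ref{lem:steps}, and extract $P'$, $P'_\pi$, and $\PRANK'$ from $\PLIST$ using a rank scan and the pointers $p.\tau.\pred_P$. One small nit is that Invariant~\ref{inv:pred} is not vacuous at $t=0$, since it covers the nodes with index $0$, but it holds there because Lemma~\ref{lem:init} sets every predecessor pointer correctly.
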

\begin{proof}
    We initialize $\PLIST$ and $\QLIST$ per Definition~\ref{def:lists} in $O(r)$-time and space by Lemma~\ref{lem:init}. Our algorithm of~\Cref{sec:algorithm} balances $Q$ in $O(r)$-time and space by Lemma~\ref{lem:runtime} and Lemma~\ref{lem:steps} for any $\alpha \geq 2$. Specifically, starting from $t=0$, repeatedly applying Lemmas~\ref{lem:balancing_step} and \ref{lem:update_step} guarantees by induction that Invariants~\ref{inv:balance} and \ref{inv:pred} hold upon termination. We first perform an $O(r)$-time scan of $\PLIST$ to build an $O(r)$-space index operator $\rank_P(p)$, which returns the sequential integer position of any node $p \in \PLIST$. Then, by~\Cref{inv:balance}, we can extract $P'= \{p.\text{idx}~|~p \in \PLIST\}$, $P'_\pi=\{p.\tau.\text{idx}~|~p \in \PLIST\}$, and $\PRANK'=\{\rank_P(p.\tau.\pred_P)~|~ p \in \PLIST\}$ in $O(r)$-time and space. Let $Q'=\{q.\text{idx}~|~q \in \QLIST\}$. By~\Cref{inv:balance}, any two consecutive elements $q_s < q_e$ in $Q'$ satisfy $|(q_s,q_e)\cap P'| < 2 \alpha$. Further, $|P'|, |P'_\pi|,|\PRANK'|\in O(r)$ by Lemma~\ref{lem:steps}. Hence, we return $\MOVE{\pi}$ in $O(r)$-time and space.
\end{proof}

\begin{corollary}
    Given $P$, $P_\pi$ and $\tau$ for a permutation $\pi$ with $r$ input intervals, then a balanced move structure $\MOVE{\pi^{-1}}$ can be constructed in optimal $O(r)$-time and space.
\end{corollary}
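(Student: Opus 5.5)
The plan is to reuse the construction behind Theorem~\ref{thm:linear} and exploit its symmetry. The algorithm of~\Cref{sec:algorithm} balances $\PLIST$ and $\QLIST$ at the same time. Consequently, by \Cref{inv:balance} at termination ($t=n$), both sides are balanced. For consecutive $q_s<q_e$ in $Q'$ we have $|(q_s,q_e)\cap P'|<2\alpha$, and symmetrically for consecutive $p_s<p_e$ in $P'$ we have $|(p_s,p_e)\cap Q'|<2\alpha$. For $\pi^{-1}$, the input intervals are exactly the output intervals of $\pi$ and conversely. That is, the input boundaries of $\pi^{-1}$ are $Q'$, its images are $\pi^{-1}(Q'[k])=P'[\tau^{-1}(k)]$, and its sorting permutation is $\tau^{-1}$. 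The second balance condition is therefore precisely the balance criterion of Theorem~\ref{thm:nt} for $\pi^{-1}$.

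Concretely, I would proceed as follows.
\begin{enumerate}
\item Run Lemma~\ref{lem:init} to build $\PLIST$ and $\QLIST$ from $P$, $P_\pi$ and $\tau$; this internally derives $\tau^{-1}$ and $Q$.
\item Run the balancing algorithm. By Lemmas~\ref{lem:runtime} and~\ref{lem:steps} it takes $O(r)$ time and space and leaves $r'\le\frac{(\alpha+1)r}{\alpha-1}$ nodes in each list.
\item Scan $\QLIST$ once in $O(r')$ time to assign each node $q$ its sequential index $\rank_Q(q)$.
\item Extract $Q'=\{q.\text{idx}\}$, $Q'_{\pi^{-1}}=\{q.\tau^{-1}.\text{idx}\}$, and the rank array $\{\rank_Q(q.\tau^{-1}.\pred_Q)\}$, each over $q\in\QLIST$ in list order.
\end{enumerate}
Step~4 mirrors the extraction of $P'$, $P'_\pi$ and $\PRANK'$ in the proof of Theorem~\ref{thm:linear}.

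The correctness of the rank array rests on \Cref{inv:pred} holding for all nodes at termination. For each node, $q.\tau^{-1}$ is the node $p$ with $p.\text{idx}=\pi^{-1}(q.\text{idx})$. Its pointer $p.\pred_Q$ is then the output interval of $\pi$, equivalently the input interval of $\pi^{-1}$, that contains $\pi^{-1}(q.\text{idx})$. This is exactly the quantity a move structure for $\pi^{-1}$ must store.

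The main obstacle I expect is verifying that balance and the predecessor pointers are indeed correct on the $P$ side for \emph{every} node at termination, not only on the $Q$ side used in Theorem~\ref{thm:linear}. This needs care at the interval boundaries where $\min(t,\cdot)$ is involved in \Cref{inv:balance}. I would argue that at $t=n$ we have $\min(n,p.\text{next}.\text{idx})=p.\text{next}.\text{idx}$, so the invariant reduces to the full balance statement. Lemma~\ref{lem:update_step} already treats both lists symmetrically, and this completes the $O(r)$ construction of $\MOVE{\pi^{-1}}$. Alternatively, one could swap the roles of $P,P_\pi,\tau$ with $Q,P$ (reordered),$\tau^{-1}$, computed in $O(r)$ time, and invoke Theorem~\ref{thm:linear} directly on $\pi^{-1}$. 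I would mention this as a one-line backup argument.
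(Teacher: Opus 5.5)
Your proposal is correct and matches the paper's proof. The paper extracts $Q'$, $\{q.\tau^{-1}.\text{idx}\}$ and $\{\rank_Q(q.\tau^{-1}.\pred_Q)\}$ from the $Q$-side list after the single bidirectional run, and gets balance and pointer correctness for $\pi^{-1}$ from both invariants holding at $t=n$; you spell out this $t=n$ reduction and the meaning of $p.\pred_Q$ more explicitly than the paper does, and your backup of applying Theorem~\ref{thm:linear} directly to $\pi^{-1}$ (inputs $Q$, $P[\tau^{-1}(\cdot)]$, $\tau^{-1}$) is also valid, though it does not share one run between $\pi$ and $\pi^{-1}$.
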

\begin{proof}
    We extract $Q'=\{q.\text{idx}~|~q \in \QLIST \}$, $Q'_{\pi^{-1}}=\{q.\tau^{-1}.\text{idx}~|~q \in \QLIST\}$, and $\QRANK'=\{\rank_Q(q.\tau^{-1}.\pred_Q)~|~q \in \QLIST\}$ analogous to the proof of Theorem~\ref{thm:linear}. Since the algorithm of~\Cref{sec:algorithm} balances in both directions, we satisfy that the move structure $\MOVE{\pi^{-1}}$ is balanced.
\end{proof}

\section{Optimal-time LCP Array from RLBWT}
\label{sec:optimal_lcp_rlbwt}

To demonstrate the algorithmic use of our optimal move structure construction, we apply it to the \LCP computation algorithm of Sanaullah et al.~\cite{sanaullah2026rlbwt}, which requires $O(n + r \log r)$-time and $O(r)$ working space to compute the \LCP array from the \RLBWT. Because the $O(r \log r)$ term is due to construction of a balanced move structure for $\FL$, applying the results of Section~\ref{sec:balancing} yields the first optimal $O(n)$-time algorithm to compute $\LCP[0..n-1]$ from $\RLBWT[0..r-1]$ within $O(r)$ working space.

\begin{theorem}
\label{thm:lf_linear}
    Given $\RLBWT[0..r-1]$, a balanced move structure $\MOVE{LF}$ or its inverse $\MOVE{FL}$ can be constructed in optimal $O(r)$-time and space.
\end{theorem}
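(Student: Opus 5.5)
The plan is to compose the two earlier results directly, in the same way Theorem~\ref{thm:fl_move_og} composed Lemma~\ref{lem:lf_input} with Theorem~\ref{thm:oldbalance}, but now using the new linear-time balancing of Theorem~\ref{thm:linear}.

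\emph{Step 1 (inputs).} Given $\RLBWT[0..r-1]$, invoke Lemma~\ref{lem:lf_input} to obtain $P$, $P_\pi$, and $\tau$ for $\pi=\LF$ (or for $\pi=\FL$) in $O(r)$-time and $O(r)$-space. The permutation $\LF$ is contiguous on each BWT run, so it has at most $r$ input intervals. The number of maximal contiguously permuted intervals could be smaller than $r$ if adjacent runs happen to map contiguously. This does not matter: Theorem~\ref{thm:linear} and the balancing procedure only use that $P$ partitions $[0,n)$ into contiguously permuted intervals, and all of its bounds are stated in terms of $|P|\le r$. The same holds for $\FL$, which has the output intervals of $\LF$ as its input intervals.

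\emph{Step 2 (balancing).} Apply Theorem~\ref{thm:linear} to $(P,P_\pi,\tau)$. This yields a balanced $\MOVE{LF}$ in $O(r)$-time and space.

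\emph{Step 3 (the inverse).} For $\MOVE{FL}$ there are two routes, and either suffices.
\begin{itemize}
  \item Apply the corollary following Theorem~\ref{thm:linear} to the $\LF$ inputs. This produces the balanced structure for $\pi^{-1}=\FL$ from the same simultaneous pass.
  \item Alternatively, take the $\FL$ inputs from Lemma~\ref{lem:lf_input} directly and apply Theorem~\ref{thm:linear} to them.
\end{itemize}

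\emph{Step 4 (optimality).} The bound is optimal because reading the input $\RLBWT$ of size $r$ already takes $\Omega(r)$ time, and the output occupies $\Theta(r)$ words.

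\emph{Main obstacle.} The only point needing care is confirming that the hypotheses of Theorem~\ref{thm:linear} are met exactly as Lemma~\ref{lem:lf_input} supplies them. In particular, $\tau$ must be given, and the run-based partition must be an admissible (possibly non-minimal) interval partition. If that theorem were read as requiring the minimal partition, I would first merge adjacent runs that permute contiguously. This merge is an $O(r)$ linear scan checking $P_\pi[j]+(P[j+1]-P[j])=P_\pi[j+1]$, after which $\tau$ is recomputed by compacting ranks, also in $O(r)$-time. All steps are $O(r)$, giving the claim.
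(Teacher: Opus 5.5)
Your proposal is correct and matches the paper's proof, which is the same composition of Lemma~\ref{lem:lf_input} with Theorem~\ref{thm:linear}. Your extra remarks are sound and go beyond what the paper states: that a possibly non-minimal run-based partition is admissible input for balancing, and that $\MOVE{FL}$ can be obtained either from the corollary or from a second application of Theorem~\ref{thm:linear}.
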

\begin{proof}
    By Lemma~\ref{lem:lf_input} and the optimal balancing procedure of Theorem~\ref{thm:linear}.
\end{proof}

\begin{theorem}
    Given $\RLBWT[0..r-1]$, a balanced move structure $\MOVE{\phi}$ or its inverse $\MOVE{\phi^{-1}}$ can be constructed in $O(n)$-time and $O(r)$-space.
\end{theorem}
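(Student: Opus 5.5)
The theorem follows by composing two earlier results, exactly as Theorem~\ref{thm:phi_move_og} was obtained but with the balancing step replaced by our optimal procedure.

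First, invoke Lemma~\ref{lem:phi_input} on $\RLBWT[0..r-1]$ to obtain the run-length encoded input $P$, $P_\pi$, and $\tau$ for $\pi=\phi$ (or for $\pi=\phi^{-1}$). This costs $O(n)$-time and $O(r)$-space. The key point is that $\phi$ and $\phi^{-1}$ each have $O(r)$ input intervals, since their interval boundaries correspond to \SA samples at \BWT run boundaries. Hence $|P|=|P_\pi|=|\tau|\in O(r)$, and the parameter ``$r$'' in Theorem~\ref{thm:linear} is indeed $O(r)$ for the \RLBWT's $r$.

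Second, apply Theorem~\ref{thm:linear} to these arrays. It produces a balanced $\MOVE{\phi}$ in $O(r)$-time and $O(r)$-space. To get $\MOVE{\phi^{-1}}$, there are two options. One is to repeat the procedure on the input for $\phi^{-1}$. The other, more economical, is to use the preceding corollary: the same bidirectional balancing pass also yields a balanced move structure for the inverse permutation. This avoids a second $O(n)$ preprocessing pass, although that pass would not change the bound anyway. Summing the two stages gives $O(n)+O(r)=O(n)$-time, with working space $\max(O(r),O(r))=O(r)$.

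The only point needing care is that the preprocessing of Lemma~\ref{lem:phi_input} stays within $O(r)$ working space while streaming through $O(n)$ positions, and that it outputs $\tau$ as well as $P$ and $P_\pi$, since Theorem~\ref{thm:linear} requires all three. Lemma~\ref{lem:phi_input} already states this explicitly, so no further obstacle remains. The $O(n)$ term cannot be improved by this route, because it is inherited entirely from computing the $\phi$ intervals. That is why the theorem claims $O(n)$-time and not $O(r)$-time.
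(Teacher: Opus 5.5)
Your proposal is correct and matches the paper's proof. The paper obtains $P$, $P_\pi$, and $\tau$ via Lemma~\ref{lem:phi_input} in $O(n)$-time and $O(r)$-space, then applies the $O(r)$-time balancing of Theorem~\ref{thm:linear}. Your remark that the bidirectional pass, via the corollary, also yields $\MOVE{\phi^{-1}}$ is a valid minor economy but is not needed for the stated bound.
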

\begin{proof}
    By Lemma~\ref{lem:phi_input} and the optimal balancing procedure of Theorem~\ref{thm:linear}.
\end{proof}

\begin{theorem}
\label{thm:irreducible_optimal}
    Given $\RLBWT[0..r-1]$ corresponding to $\BWT[0..n-1]$, the irreducible \PLCP array $\IPLCP[0..r-1]$ can be constructed in $O(n)$-time and $O(r)$-space.
\end{theorem}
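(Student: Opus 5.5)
The plan is to reuse the algorithm behind Theorem~\ref{thm:irreducible_og} (Sanaullah et al.~\cite{sanaullah2026rlbwt}) unchanged, except that its balanced move structure for $\FL$ will be built with our linear-time procedure. The first task is to split the cost of that algorithm into two parts: (i) building a balanced $\MOVE{FL}$, and (ii) everything else, namely sampling $\ISA$ values at run boundaries, comparing characters via $\FL$ move queries, and computing the irreducible values. Part (ii) should take $O(n)$-time and $O(r)$-space, because it uses only $O(1)$-time move queries (Definition~\ref{def:move_notation}) and $O(r)$ samples. By the paper's own account in Section~\ref{sec:lcp_rlbwt}, the $O(r\log r)$ term in Theorem~\ref{thm:irreducible_og} comes solely from part (i), through Theorem~\ref{thm:fl_move_og}.

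The proof then proceeds in three steps. First, build $\MOVE{FL}$ from $\RLBWT[0..r-1]$ in $O(r)$-time and space by Theorem~\ref{thm:lf_linear}, which combines Lemma~\ref{lem:lf_input} with Theorem~\ref{thm:linear}. Second, run the remaining stages of the Sanaullah et al.\ procedure as given in the Appendix. This covers:
\begin{itemize}[noitemsep, topsep=2pt, leftmargin=1.5em]
    \item computing the text positions and $\ISA$ samples associated with run starts, in $O(n)$-time via an $\LF$/$\FL$ traversal, using the $O(r)$-time $\MOVE{LF}$ if needed;
    \item the $\phi$-algorithm-style scan that obtains each irreducible $\PLCP$ value by comparing characters of adjacent suffixes, while using the standard fact that $\PLCP[i+1]\geq \PLCP[i]-1$ to amortize the total number of character comparisons to $O(n)$.
\end{itemize}
Third, add the costs: $O(r)+O(n)=O(n)$ time, since $r\leq n$. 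The working space stays $O(r)$, because the move structures have $O(r)$ intervals by Lemma~\ref{lem:steps} and the samples are $O(r)$ words.

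The main obstacle is the second step: checking that none of the other stages of the Appendix algorithm relies on a predecessor search or a sort over $O(r)$ items that costs $O(r\log r)$. The points to check are:
\begin{itemize}[noitemsep, topsep=2pt, leftmargin=1.5em]
    \item Each character comparison must be driven by a move query whose interval index is known in advance. This holds if the sampled $\ISA$ positions are stored together with their interval ranks $P.\rank(\cdot)$, so that every $\FL$ walk starts from a known pair $(i,j)$.
    \item Those ranks can be obtained during a single sequential pass over the runs.
    \item Any sorting of sampled text positions, for example to process the samples in text order, must use radix sort or bucketing over $[0,n)$. This costs $O(n+r)$-time but needs $O(n)$ space if done naively. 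To stay within $O(r)$ space, I would instead generate the samples in text order directly, by walking $\MOVE{FL}$ from $\SA^{-1}[0]$ through the entire text in $O(n)$-time. Each time the walk enters a position that is a run start, it records that position, its text offset, and its rank.
\end{itemize}
This one linear walk provides the samples in text order, which replaces any sorting. With it, the total time is $O(n)$ and the space is $O(r)$.
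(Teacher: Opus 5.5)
Your reduction is exactly the paper's proof: rerun the procedure behind \Cref{thm:irreducible_og} unchanged, but build $\MOVE{FL}$ in $O(r)$ time via \Cref{thm:lf_linear}. On its own it suffices, because the appendix proof of \Cref{thm:irreducible_og} already shows that every stage other than the $\MOVE{FL}$ construction costs $O(n)$ time and $O(r)$ space. The problem is your attempt to re-verify that fact yourself. The stages you describe are not the appendix algorithm, and with your version the $O(n)$ bound does not follow.

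Your argument from ``$O(1)$-time move queries, $O(r)$ samples, and $\PLCP[i+1]\geq\PLCP[i]-1$'' only bounds the number of character comparisons. The cost that has to be controlled is repositioning the second $\FL$ cursor. Each time $j=\phi(i)$ changes, which happens once per irreducible position and so $r$ times, you need $\ISA[j+\ell]$ together with its interval rank in $\MOVE{FL}$. Samples taken at run starts give no bound on this cost:
\begin{itemize}[noitemsep, topsep=2pt, leftmargin=1.5em]
\item Consecutive run-start text positions can be $\Theta(n)$ apart, so there is no per-jump bound.
\item Walking from the directly known $\ISA[j]=\ISA[i]-1$ costs $\ell$ move queries per jump. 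The total of these is at least the sum of irreducible LCP values minus $n$, which is not $O(n)$ in general; for random texts it is $\Theta(n\log n)$.
\end{itemize}

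The appendix instead samples $\ISA$ at every $\ceil{n/r}$-th text position and stores each sample with its $\MOVE{FL}$ rank (\Cref{lem:isa_samples}). Each of the $r$ jumps then costs at most $\ceil{n/r}$ move queries, which is $O(n)$ in total. Your text-order walk of $\MOVE{FL}$ from $\ISA[0]$ is exactly how those samples are built; it just has to record every $\ceil{n/r}$-th text position rather than run starts.

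The sorting issue you single out as the main obstacle never arises. $I$ and $\phi^+$ come from \Cref{lem:phi_input} (via \Cref{lem:irr_pos,lem:phi_heads}) already in text order, in $O(n)$ time and $O(r)$ space.
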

\begin{proof}
    By substituting our optimal $O(r)$-time $\MOVE{FL}$ construction from Theorem~\ref{thm:lf_linear} for the $O(r \log r)$-time construction in the proof of Theorem~\ref{thm:irreducible_og} (see Appendix).
\end{proof}

\begin{theorem}
    Given $\RLBWT[0..r-1]$ corresponding to $\BWT[0..n-1]$, $\LCP[0..n-1]$ can be constructed in optimal $O(n)$-time and $O(r)$ working space.
\end{theorem}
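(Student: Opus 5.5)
The plan is to mirror the proof of Theorem~\ref{thm:lcp_og} from Sanaullah et al.~\cite{sanaullah2026rlbwt}. There, the $O(n + r\log r)$ bound is the sum of two kinds of cost. The first kind is the construction of the auxiliary move structures. The second is linear-time passes over $[0,n)$ that emit $\LCP$ values. We replace every construction step that costs $O(r\log r)$ with its optimal counterpart from this section and check that nothing else exceeds $O(n)$ time or $O(r)$ working space.

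\textbf{Step 1: irreducible values and $\MOVE{\phi}$.} We first compute $\IPLCP[0..r-1]$ in $O(n)$ time and $O(r)$ space by Theorem~\ref{thm:irreducible_optimal}. Next we build $\MOVE{\phi}$ in $O(n)$ time and $O(r)$ space by the preceding theorem, which combines Lemma~\ref{lem:phi_input} with Theorem~\ref{thm:linear}. Each irreducible value $\IPLCP$ is attached to the $\phi$-input interval whose start is the text position $\SA[i]$ of a \BWT run head $i$. We record the sampled value at the start of the corresponding input interval. If balancing split an interval, the new pieces inherit the value by the rule in Step~2.

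\textbf{Step 2: recovering every $\LCP$ value.} We enumerate positions in $\SA$ order using $\MOVE{\phi^{-1}}$, which the same theorem also builds in $O(n)$ time. Equivalently, we walk $\phi$ backward from the position $\SA[n-1]$, which is recoverable from the \RLBWT. This visits positions in the order $\SA[n-1], \SA[n-2], \dots$, with $O(1)$ time per step. For each visited position $i = \SA[k]$ inside a $\phi$-input interval starting at $P[j]$, we apply the standard PLCP identity for non-run-head positions:
\[
\PLCP[i] = \PLCP[P[j]] - (i - P[j]).
\]
This holds because positions inside the same $\phi$-interval have preceding characters that agree, so the $\lcp$ values decrease by one per step. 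We then write $\LCP[k] = \PLCP[i]$ directly to the output. Since the traversal produces ranks in order $k = n-1, \dots, 0$, the output is written sequentially and needs no random access to $\ISA$. Each step is $O(1)$, so this phase takes $O(n)$ time. Its working space is the $O(r)$ move structure plus the samples.

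\textbf{Main obstacle.} The delicate point is that balancing adds new boundaries to $P'\supseteq P$. For every such split interval we must store the correct offset value at its new start, namely $\PLCP[P'[j']] = \PLCP[P[j]] - (P'[j'] - P[j])$, where $P[j]$ is the original run-head sample containing it. This requires extra bookkeeping during or after balancing. I would handle it by one $O(r)$ two-finger scan over $P$ and $P'$ after construction. That scan keeps $O(r)$ words by Lemma~\ref{lem:steps} and does not change the $O(n)$ total. Summing the phases gives $O(n)$ time, which is optimal because the output has size $n$, with $O(r)$ working space.
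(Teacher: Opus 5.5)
Your proof is correct and takes the paper's route: substitute the $O(n)$-time $\IPLCP$ computation of Theorem~\ref{thm:irreducible_optimal} into Sanaullah et al.'s pipeline, then recover every $\LCP$ value in $\SA$ order from the irreducible samples via Corollary~\ref{cor:irreducible}; the only difference is that you run the final pass with this paper's balanced $\MOVE{\phi}$ (whose $O(n)$ cost is dominated by Lemma~\ref{lem:phi_input}) rather than the length-capped $\phi^{-1}$ structure the paper cites, which works equally well. One small slip: visiting $\SA[n-1], \SA[n-2], \dots$ means iterating $\phi$ itself (e.g.\ from $\SA[n-1]=\phi(n-1)$), not $\MOVE{\phi^{-1}}$, so drop that mention and use the $\MOVE{\phi}$ you built in Step~1.
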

\begin{proof}
    By substituting our irreducible $\PLCP$ array computation from Theorem~\ref{thm:irreducible_optimal} into the procedure of Theorem~\ref{thm:lcp_og} (see Appendix).
\end{proof}

\section{Experiments}

\begin{table*}[t!]%
    \centering
    \begin{tabular}{l|rrrrrr}
        \hline
         & \multicolumn{6}{c}{\textbf{\# Chromosome-19 Seqs.}} \\
        & 32 & 64 & 128 & 256 & 512 & 1,000 \\
        \hline
        $\;n/10^6$  & 1,892.01 & 3,784.01 & 7,568.01 & 15,136.04 & 30,272.08 & 59,125.12 \\
        $\;r/10^4$  & 3,282.51 & 3,334.06 & 3,405.40 & 3,561.98 & 3,923.60 & 4,592.68 \\
        $\;n/r$     & 57.64 & 113.50 & 222.24 & 424.93 & 771.54 & 1,287.38 \\
        \hline
    \end{tabular}
    \bigskip
    \caption{Summarizes statistics of collections of human chromosome-19 sequences, where $n$ is the length of the \BWT and $r$ the number of \BWT runs.}
    \label{tab:dataset}
\end{table*}

To evaluate the efficiency of our $O(r)$-time and space move structure construction algorithm, we implemented the approach in {\CC} as part of the \texttt{Orbit} move structure library~\cite{brown2026boundingaveragestructurequery}, available at \url{https://github.com/drnatebrown/orbit}. Rather than explicitly using linked lists as described in \Cref{sec:linklist}, we simulate them by first allocating an array of size $\frac{(\alpha+1)r}{\alpha-1}$, the upper bound for the number of intervals after balancing by \Cref{thm:linear}. This allows us to use indices requiring $O(\log r)$-bits to simulate the pointers required for next, predecessor, and mapping operations required by the algorithm. To handle insertions, we maintain a cursor to the next free position in the allocated array, and append new nodes using this position. Thus, overall we use $O(r \log n)$-bits during balancing, since storing the starts of input/output intervals requires $\lceil\log n\rceil$-bits.

\begin{figure*}[t!]
    \centering
    \includegraphics[width=\linewidth]{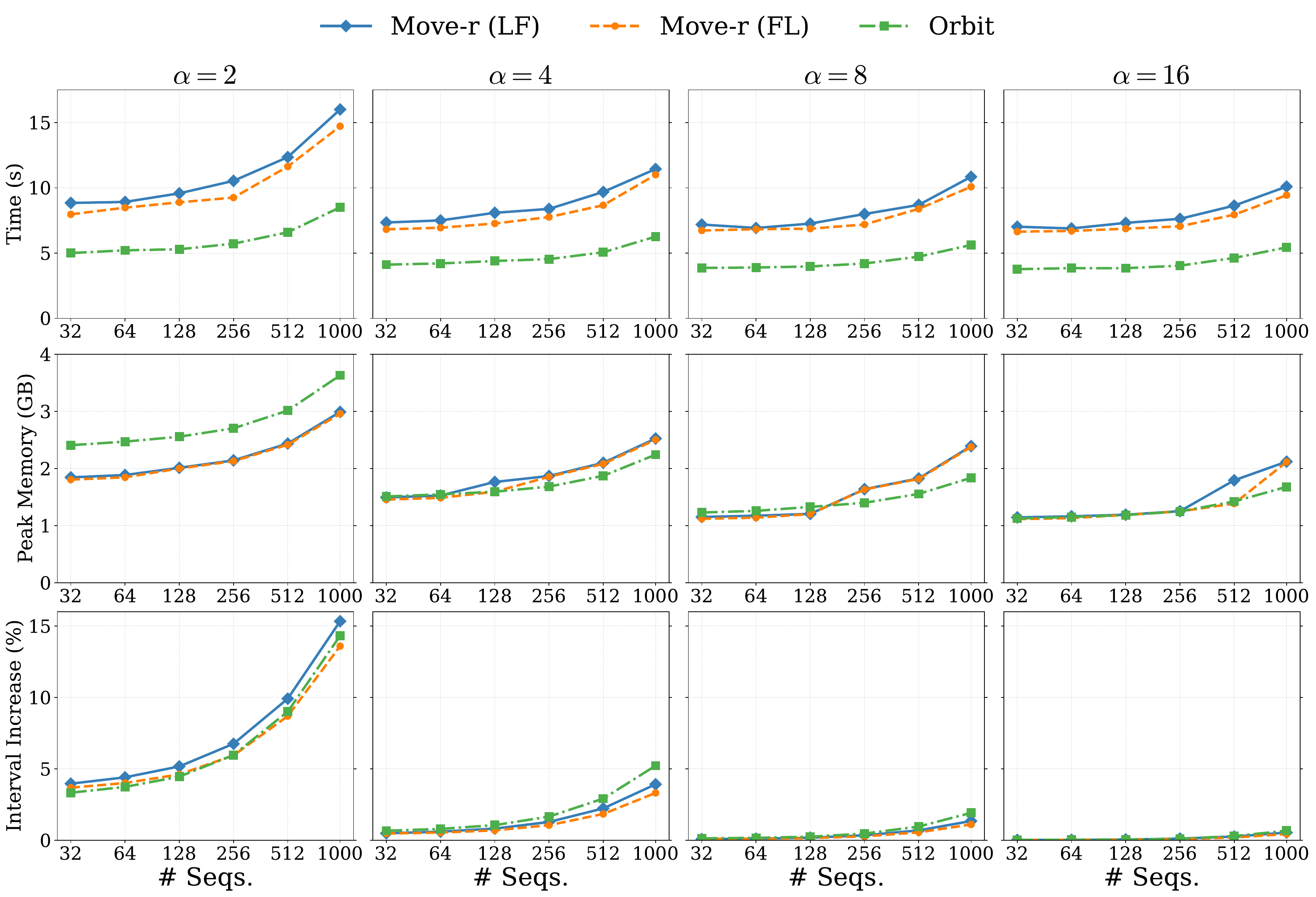}
    \caption{Scaling experiments for collections of chromosome-19 haplotypes for move structure balancing of \LF/\FL. Each column refers to a different choice of balancing parameter. Each row shows a different metric being measured, consisting of: runtime in seconds, peak memory in gigabytes, and the percentage interval increase compared to the number of initial intervals $r$. The $x$-axis describes the number of sequences in each collection. Time is averaged across $10$ runs.}
    \label{fig:chr19_lf_all}
\end{figure*}

The $O(r \log r)$-time and $O(r)$-space balancing algorithm of Bertram, Fischer, and Nalbach~\cite{bertram2024move} was implemented in their \texttt{Move-r} tool, which we compare against. We first compute the \RLBWT-based permutations \LF and $\phi$ and then pass them as input to the balancing algorithms. Since our approach balances in both directions in a single pass, we also run \texttt{Move-r} on the inverses \FL and $\phi^{-1}$. Crucially, we report each \texttt{Move-r} run independently as a standalone, single-direction runtime rather than summing them together; thus, our bidirectional construction is evaluated directly against single-direction baselines. We measure the runtime, peak memory, and percentage increase of the number of intervals on a server with an Intel(R) Xeon(R) Gold 6248R CPU running at 3.00 GHz with 48 cores and 1.5TB DDR4 memory. We use one thread for all experiments, and evaluate using balancing parameter $\alpha \in \{2, 4, 8, 16\}$.

\begin{figure*}[t!]
    \centering
    \includegraphics[width=\linewidth]{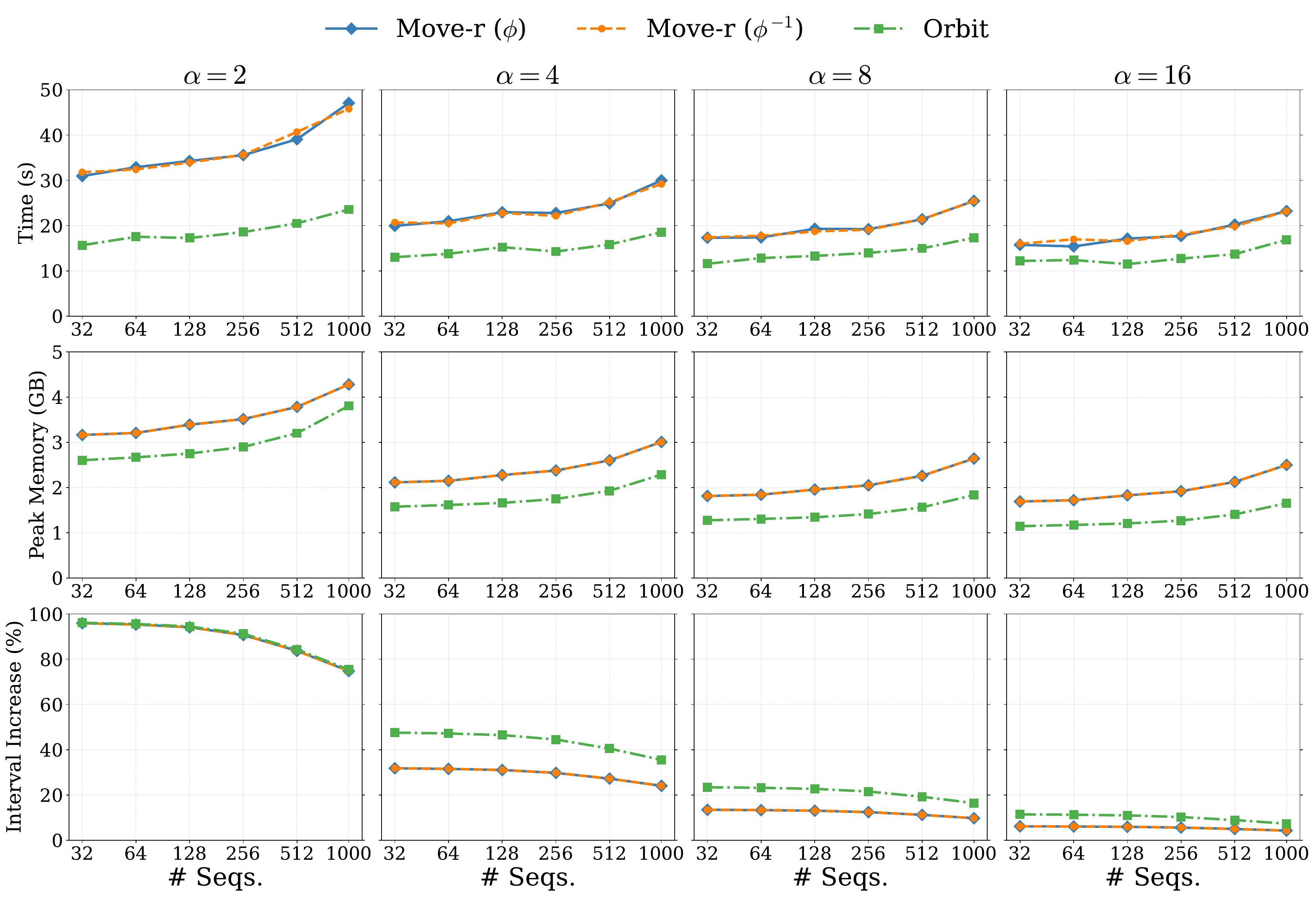}
    \caption{Scaling experiments for collections of chromosome-19 haplotypes for move structure balancing of $\phi$/$\phi^{-1}$. Each column refers to a different choice of balancing parameter. Each row shows a different metric being measured, consisting of: runtime in seconds, peak memory in gigabytes, and the percentage interval increase compared to the number of initial intervals $r$. The $x$-axis describes the number of sequences in each collection. Time is averaged across $10$ runs.}
    \label{fig:chr19_phi_all}
\end{figure*}

\subsection{Datasets}

To observe how the algorithms scale, we evaluate using the \RLBWT derived from a collection of human chromosome-19 haplotypes of up to $1,000$ sequences, using increasingly larger supersets. \Cref{tab:dataset} shows the size of each collection. We also evaluate the permutations \LF/\FL derived from the \RLBWT of the Human Pangenome Reference Consortium (HPRC)~\cite{liao2023draft} release 2 consisting of $466$ human haplotypes. This collection has size $n\approx 2.81$ trillion, with $r \approx 5.24$ billion and $n/r = 535.02$.

\subsection{Results}
For our chromosome-19 haplotype experiments, \Cref{fig:chr19_lf_all} shows the results for \LF/\FL, and \Cref{fig:chr19_phi_all} shows the results for $\phi$/$\phi^{-1}$. Crucially, across all experiments, \texttt{Orbit} is consistently faster than any single-direction run of \texttt{Move-r}, despite \texttt{Orbit} constructing both directions simultaneously in a single pass. This demonstrates an absolute wall-clock speedup over the search-tree baseline, even while performing the extra algorithmic work of bidirectional balancing. In terms of peak memory for \LF/\FL, \texttt{Orbit} has a higher footprint at $\alpha=2$, but becomes comparable to \texttt{Move-r} at larger $\alpha$ while exhibiting superior scaling on larger collections; for $\phi$/$\phi^{-1}$, \texttt{Orbit} strictly reduces peak memory across all parameters.

Regarding structural size, simultaneously balancing both directions could theoretically double the number of inserted intervals. However, on our results for $\alpha=2$, the interval increase produced by \texttt{Orbit}'s bidirectional build is comparable to, and sometimes less, than \texttt{Move-r} balancing only a single direction. For larger $\alpha$, the increase remains strictly less than double that of a single-direction build. When comparing permutations, we find that \LF/\FL requires significantly fewer interval insertions overall than $\phi$/$\phi^{-1}$, though its relative insertion rate increases slightly with collection size. Conversely, the relative interval growth required for $\phi$/$\phi^{-1}$ steadily decreases as the collection grows.

\Cref{fig:hprc_results} shows the scaling results on the much larger HPRC collection for \LF/\FL. On this massive dataset, we observe the same advantages for \texttt{Orbit} over single-direction runs of \texttt{Move-r}: strictly faster wall-clock execution despite balancing both directions, superior peak memory scaling as $\alpha$ increases, and interval growth that remains well below the theoretical bidirectional doubling penalty.

\begin{figure*}[t!]
    \centering
    \includegraphics[width=\linewidth]{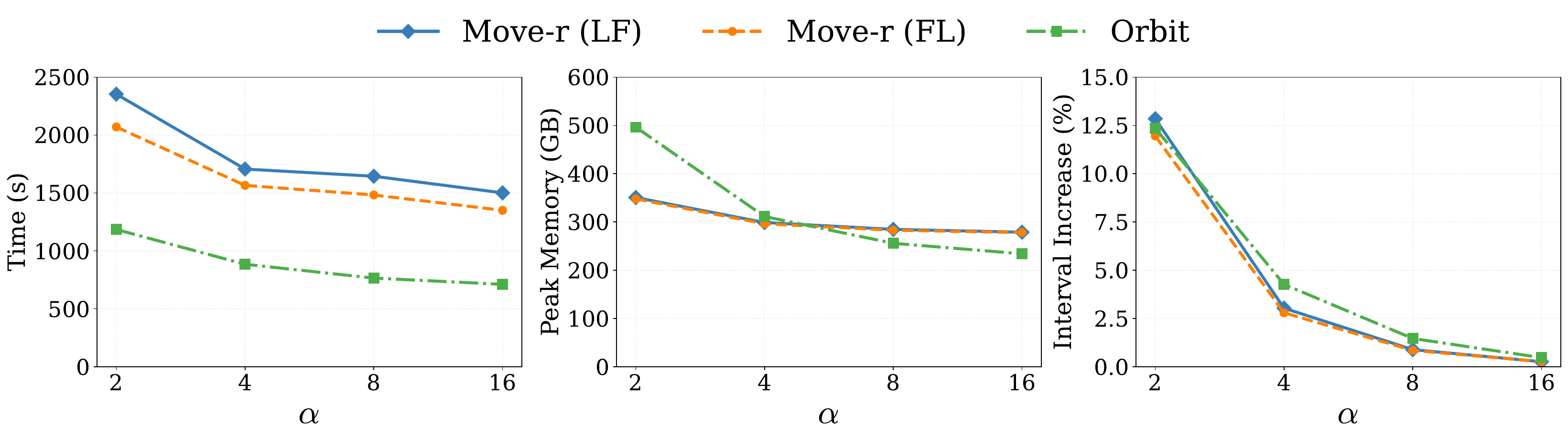}
    \caption{Move structure balancing of \LF/\FL on the HPRC release 2 collection. Each column refers to a different metric being measured, consisting of: runtime in seconds, peak memory in gigabytes, and the percentage interval increase compared to the number of initial intervals $r$. The $x$-axis describes the balancing parameter used.} 
    \label{fig:hprc_results}
\end{figure*}

\section{Conclusion}
We have presented the first optimal $O(r)$-time and space move structure construction algorithm. By replacing previous $O(r\log r)$-time search-tree approaches~\cite{nishimoto2021optimal,bertram2024move} with our bidirectional linked-list approach, we obtain the first optimal $O(n)$-time algorithm for streaming the $\LCP$ array from an $\RLBWT$ within $O(r)$ working space~\cite{sanaullah2026rlbwt}. Our experimental evaluation confirms that these theoretical gains translate directly into practice: our bidirectional construction is consistently faster than single-direction baselines while utilizing comparable or superior peak memory.

This improvement directly benefits compressed sequence analysis pipelines. For instance, in $\RLBWT$-based maximal match tools such as \texttt{mumemto}~\cite{shivakumar2025mumemto}, $\LCP$ enumeration remained the asymptotic bottleneck~\cite{brown2026boundingaveragestructurequery}; our results reduce this step to optimal-time in compressed space. Furthermore, because our algorithm balances both $\pi$ and $\pi^{-1}$ simultaneously, future work could collapse both directions into a unified \textit{invertible move structure}. This would allow constant-time queries in both directions without auxiliary satellite data or practical offset fields~\cite{brown2022rlbwt}. 

Finally, our linear bound isolates the remaining challenge for $\RLBWT$-based construction. Because move structure construction is no longer an asymptotic bottleneck, the sole remaining barrier to constructing full $\LF, \FL, \phi,$ and $\phi^{-1}$ navigational structures from raw text in optimal $O(n)$-time and $O(r)$-space is improving upon the current-best $O(n + r \log r)$-time text to $\RLBWT$ construction algorithm in $O(r)$-space~\cite{nishimoto2022optimal}.


\appendix
\section{Appendix}

This Appendix serves to explain in greater detail the lemmas and theorems included in~\Cref{sec:lcp_rlbwt}, and is not required to understand the main result of the paper, optimal-time move structure construction. However, the proofs included in this section show exactly where our result can be applied to past methods, which we do so without full explanation in~\Cref{sec:optimal_lcp_rlbwt}.

To analyze improvements made with our $O(r)$-time balancing algorithm, we expand upon the specifics of Sanaullah et al.'s~\cite{sanaullah2026rlbwt} $O(n + r \log r)$-time and $O(r)$ working space algorithm to construct the LCP array from the \RLBWT. This includes re-framing their result to extend from past work and follow from provable lemmas/theorems to make obvious how our $O(r)$-time balancing result applies. Notably, recent work by Brown and Langmead~\cite{brown2026boundingaveragestructurequery} solved some existing problems present in the original work, which are included in our analysis. Important results from this section are restated in the main text (see \Cref{thm:fl_move_og,thm:irreducible_og,thm:lcp_og}). 

\subsection{LCP Array from BWT Background}

The first algorithm computing the \LCP array from \BWT or \SA in optimal $O(n)$-time was an $O(n)$-space approach by Kasai et al.~\cite{kasai2001linear} using $\SA[0..n-1]$, $\ISA[0..n-1]$, and $T[0..n-1]$. From $i=0 \to n-1$ it compares suffix $T[i..n-1]$ with its lexicographically preceding suffix (the suffix starting at $j=\SA[(\ISA[i] - 1) \mod n]$). The suffixes $T[i..n-1]$ and $T[j..n-1]$ are compared symbol by symbol until a mismatch occurs at $T[i+\ell + 1] \neq T[j+\ell+1]$ at which point we set $\LCP[\ISA[i]]=\ell$. The key insight is that if $\ell > 1$ symbols match at iteration $i$, then at least $\ell - 1$ symbols match in iteration $i+1$.

Consider that, if $\ell > 1$, then for $j=\SA[(\ISA[i] - 1) \mod n]$,  $T[i..n-1]=cwu$ and $T[j..n-1]=cwv$ have a shared prefix $cw$ of length $\ell$ where $c$ is a single symbol and $u \neq v$. Since $T[j..n-1]$ is the lexicographic predecessor of $T[i..n-1]$, it follows that $T[j+1..n-1]=wv$ is lexicographically smaller than $T[i+1..n-1]=wu$. By property of the suffix array, the lexicographically smallest suffix with the largest shared prefix for any suffix is its predecessor. Therefore, $T[i+1..n-1]$ shares a matching prefix of length at least $\ell -1$, the string $w$, with its lexicographic predecessor since there exists a lexicographically smaller suffix matching this length. Thus, we can safely skip $\ell - 1$ comparisons in iteration $i+1$ when $\LCP[\ISA[i]]=\ell$. A consequence of this insight is that comparisons of equal symbols  occur at most $n$ times (since each one increments $\ell$ and $i+\ell$ is non-decreasing from $0\to n-1$) and character comparisons of non-equal symbols occur at most $n$ times (since each one increments $i$). Therefore at most $2n$ symbol comparisons occur, implying an overall $O(n)$-time algorithm. 

Notice that the access pattern to $\LCP$ is not sequential, and further that $j=\SA[(\ISA[i] - 1) \mod n]$ can be replaced with $j=\phi(i)$. The $\phi$ algorithm by K\"arkk\"ainen, Manzini, and Puglisi~\cite{karkkainen2009permuted} is a variant of the algorithm of Kasai et al.~\cite{kasai2001linear} which constructs the $\PLCP$ array first using $\phi$. This allows them to omit $\ISA$ by taking as input $T[0..n-1]$ and $\SA[0..n-1]$ and constructing $\phi[0..n-1]$. Using $j=\phi(i)$ they make the same comparisons as prior, but now set $\PLCP[i]=\ell$ sequentially as $i$ is incremented. The final $\LCP$ array is then computed using the fact that $\LCP[i]=\PLCP[\SA[i]]$.

The above algorithms all rely on random access to the text $T$; when stored in memory to achieve constant time, this requires $n \log \sigma$ bits. Beller et al.~\cite{beller2013computing} introduced the first method to compute \LCP from \BWT or \SA which attempts to use smaller working space, achieving $O(n)$ bits of working space and $O(n \log \sigma)$-time. They avoid random access to $T$ by working over a \BWT and its wavelet tree. Prezza and Rosone~\cite{prezza2019space} achieved $o(n)$ bits of working space by combining the approach of Beller et al.~\cite{beller2013computing} with Belazzougui et al.'s~\cite{belazzougui2014linear} suffix tree interval enumeration algorithm. Although this does not require random access to $T$, it does require wavelet tree queries and the extra $O(\log \sigma)$-time factor. Further, although compact, their space bound is not in terms of $r$ which may be desired for large repetitive texts.

\subsection{PLCP Move Structures}

A known property of the \PLCP array is that if $\PLCP[i]$ is reducible, then $\PLCP[i]=\PLCP[i-1]-1$~\cite{karkkainen2009permuted}. 

\begin{corollary}
\label{cor:irreducible}
     Let $I = \{\, i \in [0,n) \mid \PLCP[i] \text{ is irreducible}\,\}$. Then where $j=I.\rank(i)$, $\PLCP[i]=\IPLCP[j] - (i - I[j])$~\cite{karkkainen2009permuted,gagie2020fully}.
\end{corollary}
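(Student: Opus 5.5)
The corollary follows from the reducibility property stated just above it: if $\PLCP[i]$ is reducible, then $\PLCP[i]=\PLCP[i-1]-1$. Note that the set $I$ in the statement is the set of text positions holding irreducible values, stored in sorted order; it is not the input-interval set of the move structure. I take $\IPLCP[j]$ to be the irreducible value at the $j$-th smallest such position, i.e.\ $\IPLCP[j]=\PLCP[I[j]]$. If $\IPLCP$ is instead indexed by BWT run order, we first reindex it by sorting on the text position $\SA[\cdot]$ of each run head, which costs nothing in this argument.

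Fix $i\in[0,n)$. First we need $j=I.\rank(i)$ to be well defined, that is, $I[0]\le i$ for every $i$, which amounts to $0\in I$. Position $0$ is irreducible because the reducibility relation $\PLCP[i]=\PLCP[i-1]-1$ only makes sense for $i\ge 1$, so text position $0$ is treated as a run head or boundary case. Concretely, $\SA[\ISA[0]]=0$ means $\BWT[\ISA[0]]=T[n-1]=\$$, and the unique $\$$ forms its own BWT run, so $\ISA[0]$ starts a run. Hence $I.\pred(i)=I[j]$ exists and satisfies $I[j]\le i$.

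Next, by maximality of the predecessor, every $m$ with $I[j]<m\le i$ lies outside $I$, so $\PLCP[m]$ is reducible and $\PLCP[m]=\PLCP[m-1]-1$. Induction on $m$ from $I[j]$ up to $i$ then gives $\PLCP[i]=\PLCP[I[j]]-(i-I[j])=\IPLCP[j]-(i-I[j])$.

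The only delicate step is the base case, namely confirming that position $0$ (or whatever the minimal position is) lies in $I$, together with fixing the indexing convention for $\IPLCP$. Once those are settled, the rest is a telescoping induction.
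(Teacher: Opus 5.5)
Your proof is correct and takes the paper's route. The paper presents the corollary as an immediate consequence of the reducibility property $\PLCP[i]=\PLCP[i-1]-1$, which is exactly the telescoping from the nearest irreducible position $I[j]\le i$ that you carry out, with $\IPLCP[j]=\PLCP[I[j]]$ (the indexing used by the appendix's modified $\phi$ algorithm). Your check that $0\in I$, because $\BWT[\ISA[0]]=\$$ forms a singleton run, fills in a detail the paper leaves implicit: it is what makes $I.\rank(i)$ well defined for every $i$.
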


A particularly useful relationship exists between $\phi$ intervals and irreducible $\PLCP$ values~\cite{gagie2020fully}. The \xth{j} input interval of $\phi$, $[P[j], P[j+1])$, is such that $\PLCP[P[j]]$ is irreducible and corresponds to $\IPLCP[j]$. Otherwise, for any $i\in(P[j], P[j+1])$, $\PLCP[i]$ is reducible. Thus, given \IPLCP alongside a $\phi$/$\phi^{-1}$ move structure, we can output $\PLCP[i]$ with respect to our current position $i$ in the permutation in $O(1)$-time~\cite{tatarnikov2022moni, sanaullah2025efficient}.

\subsection{Modified \texorpdfstring{$\phi$}{\textit{ϕ}} Algorithm}\label{sec:modified_phi}

A key difference from the $\phi$ algorithm of K\"arkk\"ainen, Manzini, and Puglisi~\cite{karkkainen2009permuted} is that Sanaullah et al.~\cite{sanaullah2026rlbwt} only need to compute the irreducible \PLCP array, $\IPLCP[0..r-1]$. Thus, they only compute $\phi(i)$ when position $i$ corresponds to an irreducible $\PLCP[i]$. Notice that these positions are the same as the input interval starts $P$ with respect to the permutation $\phi$, and thus we can compute $I = \{\, i \in [0,n) \mid \PLCP[i] \text{ is irreducible}\,\}$ in $O(n)$-time and $O(r)$-space. As such, the modified $\phi$ algorithm iterates over $k=0\to r-1$ setting $i=I[k]$ at the start of each iteration. The amount of work is equivalent to the $\phi$ algorithm, since this is equal to iterating over $i=0\to n-1$ and filling only irreducible positions. To update $\ell > 0$ for a new iteration $k$, we set $\ell=\ell - (i-I[k-1])$ to account for skipped reducible \PLCP positions.

\begin{lemma}
\label{lem:irr_pos}
    Given $\RLBWT[0..r-1]$ corresponding to $\BWT[0..n-1]$, we can construct $I = \{\, i \in [0,n) \mid \PLCP[i] \text{ is irreducible}\,\}$ in $O(n)$-time and $O(r)$-space by Lemma~\ref{lem:phi_input}.
\end{lemma}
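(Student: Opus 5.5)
The plan is to identify $I$ with the array $P$ of input-interval starts of the permutation $\phi$, and then read $P$ off the output of Lemma~\ref{lem:phi_input}.

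\emph{Step 1: the equality $I = P$.} A position $i$ is an input-interval start of $\phi$ exactly when $i=0$ or $\phi(i)\neq\phi(i-1)+1$. I would characterize irreducibility of $\PLCP[i]$ the same way.
\begin{itemize}[noitemsep, topsep=2pt, leftmargin=1.5em]
    \item Take $i>0$ and write $s=\ISA[i-1]$, $s'=\ISA[i]$. Then $\LF(s')=s$.
    \item $\PLCP[i]$ is irreducible exactly when $s'$ is the first position of a \BWT run, i.e.\ $s'=0$ or $\BWT[s'-1]\neq\BWT[s']$.
    \item If $s'$ is not a run start, then $\BWT[s'-1]=\BWT[s']$, so \LF maps $s'-1$ to $s-1$. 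Since $\SA[\LF(x)]=\SA[x]-1$, this gives $\phi(i)-1=\SA[s-1]=\phi(i-1)$, i.e.\ $\phi(i)=\phi(i-1)+1$.
    \item Conversely, if $\phi(i)=\phi(i-1)+1$, then the suffixes at $\phi(i-1)$ and $i-1$ are adjacent in \SA. Extending both by one character, the suffixes at $\phi(i)$ and $i$ are adjacent exactly when the preceding characters agree. This shows $\BWT[s'-1]=\BWT[s']$, so $s'$ is not a run start.
    \item The boundary case $i=0$ is irreducible under both definitions.
\end{itemize}
This equivalence is precisely the correspondence the paper already cites from Gagie et al.~\cite{gagie2020fully} (input interval $j$ of $\phi$ starts at the irreducible position associated with $\IPLCP[j]$). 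I would cite that result and include only the sketch above for completeness.

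\emph{Step 2: construction.} By Lemma~\ref{lem:phi_input}, from $\RLBWT[0..r-1]$ we can build $P$, $P_\phi$, and $\tau$ for $\phi$ in $O(n)$-time and $O(r)$-space. We then take $I=P[0..r-1]$, dropping the sentinel $P[r]=n$. This array is already sorted, which is the form the modified $\phi$ algorithm needs for rank and iteration, and it has exactly $r$ entries, one per \BWT run. So $O(r)$-space holds.

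\emph{Main obstacle.} The only subtle point is the identification in Step 1, in particular the off-by-one conventions. We must be sure that the irreducible positions (defined by \BWT run starts in \SA order) match the $\phi$ boundaries under the paper's definition $\phi(\SA[i])=\SA[i-1]$, with the cyclic conventions at $i=0$ and at the suffix $\$$. Once this is settled, the result is an immediate corollary of Lemma~\ref{lem:phi_input}.
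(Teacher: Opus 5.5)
Your route is the paper's own: identify $I$ with the input-interval starts of $\phi$ and read them off Lemma~\ref{lem:phi_input}. However, the converse bullet in Step~1 is false, so $I=P$ does not hold for the maximal partition you define.

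Your forward direction is fine. The wrap-around case $\phi(i)=0$ is excluded, because it would force two copies of \texttt{\$} in the \BWT. But what $\phi(i)=\phi(i-1)+1$ actually gives you is $\LF(s'-1)=\LF(s')-1$, and that does not force $\BWT[s'-1]=\BWT[s']$. Suppose $s'-1$ is the last occurrence of a symbol $c$ in the \BWT and $s'$ is the first occurrence of the next larger symbol $d$ occurring in $T$. Then the two positions are sent to the last $c$ and the first $d$ of the first column, which are adjacent. So your claim that both pairs $(\phi(i-1),i-1)$ and $(\phi(i),i)$ can be \SA-adjacent only when $T[\phi(i-1)]=T[i-1]$ is exactly what fails.

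Concretely, take the text \texttt{bbaa\$}. Then $\SA=[4,3,2,1,0]$, $\BWT=$ \texttt{aabb\$}, and $\phi=(1,2,3,4,0)$. The maximal interval starts are therefore $\{0,4\}$, whereas the run starts $\{0,2,4\}$ give $I=\{0,2,4\}$. Position $2$ is genuinely irreducible: $\PLCP=[1,0,1,0,0]$, so $\PLCP[2]=1\neq\PLCP[1]-1$. Using $\{0,4\}$ in Corollary~\ref{cor:irreducible} would return $-1$ for $\PLCP[2]$. In general you only get $P\subseteq I$ for the maximal $P$. This also contradicts your Step~2 claim that $P$ has exactly $r$ entries.

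The repair is to drop the converse entirely. The $\phi$ partition meant in Lemma~\ref{lem:phi_input} (and in Gagie et al.) is the run-induced one, with a boundary at $\SA[p]$ for every \BWT run start $p$. It has exactly $r$ boundaries, and they coincide with $I$ by the definition of an irreducible \PLCP value. Your forward direction is precisely what certifies this (generally non-maximal) partition as contiguously permuted, and hence a legitimate move-structure input. With that reading, Step~2 goes through verbatim and matches the paper's argument. Note that the paper's literal maximal definition of $P$ in Section~\ref{sec:movestructure} must be read the same way for its own identification of $I$ with $P$ to be correct.
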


At iteration $k$ with some updated $\ell$ and $i$, we must compute $j=\phi(i)$ to compare the suffixes from $T[i+\ell..n-1]$ and $T[j+\ell..n-1]$. Notice that $P_\pi$ with respect to $\phi$ gives the permutation mapping for any irreducible \PLCP position in $I$ and thus we can construct $\phi^+=\{\phi(i)~|~i \in I \}$ in $O(n)$-time and $O(r)$-space. Then, we compute $j=\phi(i)=\phi^+[k]$ with $\phi^+$. The algorithm proceeds by performing the same symbol comparisons as the $\phi$ algorithm, incrementing $\ell$ with each match. When a mismatch occurs, we set $\IPLCP[k]=\ell$.

\begin{lemma}\label{lem:phi_heads}
    Given $\RLBWT[0..r-1]$ corresponding to $\BWT[0..n-1]$, we can construct $\phi^+=\{\phi(i)~|~i \in I \}$ in $O(n)$-time and $O(r)$-space by \Cref{lem:phi_input,lem:irr_pos}.
\end{lemma}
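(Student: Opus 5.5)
We obtain $\phi^+$ directly from the input-interval data that \Cref{lem:phi_input} already produces for $\phi$, and index it by the positions that \Cref{lem:irr_pos} enumerates. First, invoke \Cref{lem:phi_input} on $\RLBWT[0..r-1]$ to obtain, in $O(n)$-time and $O(r)$-space, the arrays $P$ and $P_\pi$ for the permutation $\phi$. By definition $P_\pi[j]=\phi(P[j])$ for every $j\in[0,r)$, so $P_\pi$ already lists $\phi$ evaluated at every input-interval start of $\phi$.

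Second, identify these interval starts with the irreducible positions. By \Cref{lem:irr_pos} and the correspondence recalled in the PLCP move-structure discussion, $\PLCP[P[j]]$ is irreducible for each $j$, and every $i\in(P[j],P[j+1])$ is reducible. Hence $I$ and $P[0..r-1]$ coincide as sorted arrays, giving $I[k]=P[k]$ for each $k$.

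Third, set $\phi^+[k]=P_\pi[k]$ for $k\in[0,r)$. This yields $\phi^+[k]=\phi(I[k])$, which is exactly the order in which the modified $\phi$ algorithm of \Cref{sec:modified_phi} consumes the values as it scans $k=0\to r-1$. The copy costs $O(r)$-time and $O(r)$-space, so the total is $O(n)$-time and $O(r)$-space, dominated by \Cref{lem:phi_input}.

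The main point needing care is the second step: that irreducible PLCP positions are exactly the $\phi$ input-interval starts, including the boundary positions. I would justify this with the standard fact that $\phi(i)=\phi(i-1)+1$ holds precisely when $\BWT[\ISA[i]]=\BWT[\ISA[i]-1]$, i.e.\ when $\ISA[i]$ is not the start of a BWT run, which is the reducibility condition. Since the paper already takes this correspondence as given from the cited works, I would cite it rather than reprove it.
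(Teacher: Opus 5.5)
Your construction is the same as the paper's: take $P$ and $P_\pi$ for $\phi$ from \Cref{lem:phi_input}, identify $I$ with $P$, and set $\phi^+[k]=P_\pi[k]$. The $O(n)$ cost is dominated by \Cref{lem:phi_input}. If you simply cite the correspondence, as the paper does, the argument is fine.

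The justification you sketch for $I=P$ is false as stated, because the ``standard fact'' holds in only one direction. The valid direction: if $x=\ISA[i]$ is not a BWT run start, then $\BWT[x-1]=\BWT[x]$ gives $\LF(x-1)=\LF(x)-1$, and hence $\phi(i-1)=\phi(i)-1$. Here $i\ge 1$ and $\SA[x-1]\ge 1$ because \$ occurs only once.

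The converse fails. Take $T=\mathtt{ba}\$$. Then $\SA=[2,1,0]$, $\BWT=\mathtt{ab}\$$ with $r=3$, and $\phi(0)=1$, $\phi(1)=2$, $\phi(2)=0$. So $\phi(1)=\phi(0)+1$ even though $\ISA[1]=1$ starts a run, and indeed $\PLCP[1]=0\neq\PLCP[0]-1$.

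Consequently, if $P$ were the maximal partition defined in \Cref{sec:movestructure}, it could be a strict subset of $I$. In this example $P$ has starts $\{0,2\}$ while $I=\{0,1,2\}$, so your $\phi^+$ would miss an entry.

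What actually makes $I=P$ true is the partition produced behind \Cref{lem:phi_input}. That $r$-interval partition places a start at $\SA[x]$ for every BWT run start $x$, whether or not $\phi$ is contiguous there. With that $P$, $I=P$ follows directly from the definition of irreducibility. The valid direction above then certifies that these intervals, which need not be maximal, are contiguously permuted, so $P_\pi[k]=\phi(I[k])$.
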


\subsubsection{Simulating Random Access to Text}

However, to replace random access to $T$, Sanaullah et al.~\cite{sanaullah2026rlbwt} construct a move structure for $\FL$, $\MOVE{FL}$, in $O(r+r\log r)$-time and $O(r)$-space. Let us also store the $\BWT$ character of $\LF$ intervals permuted to their corresponding $\FL$ intervals, such that $\RLF=\{\BWT[\LF(i)]~|~i \in P'\}$ where $P'$ is the set of balanced input intervals with respect to $\MOVE{\FL}$. Informally, these are the characters of the \RLBWT (and characters at split points from balancing) as ordered by their permutation into the first column of the Burrows-Wheeler matrix (BWM). A character of the text $T[x]=\RLF[y]$ if $y=P'.\rank(\ISA[x])$ and thus we can extract characters of the text while using $\MOVE{FL}$ move queries through access with positional ranks.

\begin{theorem}[\Cref{thm:fl_move_og} of main text]
\label{lem:fl_move_og_app}
    Given $\RLBWT[0..r-1]$, then $\MOVE{FL}$ can be constructed in $O(r \log r)$-time and $O(r)$-space by Lemma~\ref{lem:lf_input} and Theorem~\ref{thm:oldbalance}.
\end{theorem}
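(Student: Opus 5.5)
The statement is a composition of two earlier results, so the plan is to chain them. First, invoke Lemma~\ref{lem:lf_input} on $\RLBWT[0..r-1]$ to obtain $P$, $P_\pi$, and $\tau$ for the permutation $\pi=\FL$ in $O(r)$-time and $O(r)$-space. Concretely, this means computing the character counts. It also means computing, for each run, the starting position of its image in the first column of the Burrows-Wheeler matrix, which gives the $\LF$ intervals together with their images. The $\FL$ input intervals are exactly the $\LF$ output intervals, so for $\FL$ the roles of $P$ and $Q$ (and of $\tau$ and $\tau^{-1}$) are swapped. The number of intervals for $\FL$ is therefore also $r$.

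Second, feed these arrays into Theorem~\ref{thm:oldbalance}. This yields a balanced $\MOVE{FL}$ with $O(r)$ intervals, using $O(r\log r)$-time for the $O(r)$ balancing steps. Each step performs search-tree insert, predecessor, and successor operations. The space is $O(r)$ words, since the search trees hold $O(r)$ elements by Theorem~\ref{thm:nt}. Summing the two phases gives $O(r)+O(r\log r)=O(r\log r)$ time and $O(r)$ space.

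The only point needing care is that the output of Lemma~\ref{lem:lf_input} matches the input format assumed by Theorem~\ref{thm:oldbalance}. That format is $P$, $P_\pi$, and $\tau$ for the same permutation, with $|P|=r$. When Lemma~\ref{lem:lf_input} is stated for $\LF$, I would obtain the $\FL$ inputs as follows. Set $P_{\FL}=Q_{\LF}$ and $P_{\FL,\pi}[k]=P_{\LF}[\tau^{-1}(k)]$, and use $\tau^{-1}$ as the new sorting permutation. Inverting $\tau$ takes $O(r)$ time by a single array pass, exactly as in the proof of Lemma~\ref{lem:init}. I expect this bookkeeping to be the only mild obstacle, and it stays within $O(r)$.
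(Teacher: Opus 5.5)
Your proposal is correct and matches the paper's proof, which simply composes Lemma~\ref{lem:lf_input} with Theorem~\ref{thm:oldbalance} to get $O(r)+O(r\log r)=O(r\log r)$ time and $O(r)$ space. Your extra step that converts the \LF arrays into \FL arrays (taking \LF's $Q$ as the new $P$ and $\tau^{-1}$ as the new sorting permutation) is correct, but it is unnecessary: Lemma~\ref{lem:lf_input} is already stated for both \LF and its inverse \FL.
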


\begin{lemma}\label{lem:rlf}
    Given $\RLBWT[0..r-1]$ and $\MOVE{FL}$, then $\RLF=\{\BWT[\LF(i)]~|~i \in P'\}$ can be constructed in $O(r)$-time and $O(r)$-space~\cite{nishimoto2021optimal,sanaullah2026rlbwt}.
\end{lemma}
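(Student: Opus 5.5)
The plan is to evaluate $\LF(i)$ for every $i \in P'$ and then read off the \BWT character at each such position. Both steps are done with merge-style two-finger walks over sorted $O(r)$-size lists, so no predecessor structure is needed. The one non-obvious step is putting the values $\LF(i)$ into sorted order in $O(r)$ time without a general integer sort.

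First, by Lemma~\ref{lem:lf_input}, we build $P$, $P_\pi$ and $\tau$ for $\LF$ from $\RLBWT[0..r-1]$ in $O(r)$ time. Here $P$ holds the \BWT run starts, which are exactly the input interval starts of $\LF$. We also compute $\tau^{-1}$ and $Q$ as in Lemma~\ref{lem:init}. The set $P'$ is the sorted array of input starts of $\MOVE{FL}$, and it has size $O(r)$ by Lemma~\ref{lem:steps}. A two-finger walk over $P'$ and $P$ assigns to every $i \in P'$ its run index $j = P.\rank(i)$. We bucket the elements of $P'$ by this index, so that each bucket $B_j$ lists, in increasing order, the points of $P'$ lying in run $j$. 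The walk costs $O(|P'| + r) = O(r)$ time. For each $i \in B_j$ we then have $\LF(i) = P_\pi[j] + (i - P[j])$.

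Second, and this is the step I expect to be the main obstacle, we need the multiset $\{\LF(i) \mid i \in P'\}$ in sorted order. A generic sort would cost $O(r\log r)$, or radix passes that depend on $\log n/\log r$. Instead we use the contiguity of $\LF$. Every run $j$ is mapped onto the output interval $\outputint[\tau(j)]$ by an order-preserving translation, and the output intervals partition $[0,n)$ in the order of $Q$. We therefore iterate $k = 0,\dots,r-1$, set $j = \tau^{-1}(k)$, and emit $\LF(i) = Q[k] + (i - P[j])$ for each $i \in B_j$ in bucket order. Each emitted value is tagged with the index $y$ of its source $i$ in $P'$. The output is globally sorted: values from bucket $j$ are increasing and lie inside $\outputint[k]$, and successive $k$ give disjoint, increasing intervals. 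The pass touches each element of $P'$ once and each output interval once, so it runs in $O(r)$ time.

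Finally, we walk this sorted list against the run starts $P$ of the \BWT, again with two fingers. For each value $\LF(i)$ this recovers the run $j' = P.\rank(\LF(i))$, and hence $\BWT[\LF(i)] = \RLBWT[j'].c$. We write that character into $\RLF[y]$ using the stored tag. This last walk is $O(r)$ as well.

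All auxiliary arrays (buckets, tags, $P$, $P_\pi$, $\tau^{-1}$, $Q$) have size $O(|P'| + r) = O(r)$. The whole construction therefore uses $O(r)$ time and $O(r)$ space, which proves Lemma~\ref{lem:rlf}.
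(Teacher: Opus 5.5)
As a proof of the statement exactly as printed, your argument works: it builds $\{\BWT[\LF(i)] \mid i\in P'\}$ in $O(r)$ time and space. The paper gives no proof of its own here and only cites Nishimoto--Tabei and Sanaullah et al., so there is no route to match. Your tagged merge in the second step is a sound way to sort the $\LF$-images without a general integer sort. It relies on two facts: each run $j$ is translated order-preservingly onto $\outputint[\tau(j)]$, and the output intervals appear in $Q$-order.

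\textbf{The caveat is that the printed formula has $\LF$ and $\FL$ swapped, so what you construct is not the array the paper uses.}

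\begin{itemize}
\item The elements of $P'$ are first-column rows, the domain of $\FL$. The next sentence of the paper requires $T[x]=\RLF[y]$ for $y=P'.\rank(\ISA[x])$.
\item Write $F[i]=T[\SA[i]]$. Since $\SA[\FL(i)]=\SA[i]+1$, this requirement forces $\RLF[y]=F[P'[y]]=\BWT[\FL(P'[y])]$.
\item This is well defined because every balanced $\FL$ input interval lies inside one $\LF$ output interval, and $F$ is constant there.
\item Under the literal reading, $\SA[\LF(i)]=\SA[i]-1$ gives $\BWT[\LF(\ISA[x])]=T[x-2]$ (indices mod $n$). Moreover, $\BWT[\LF(\cdot)]$ is not constant on $\FL$ intervals, so your array would not support the text extraction in the modified $\phi$ algorithm.
\end{itemize}

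The intended version is easier and needs none of your sorting machinery.

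\begin{itemize}
\item Walk the sorted $P'$ against the $\LF$ output starts $Q$ with two fingers to obtain $k=Q.\rank(P'[y])$ for every $y$.
\item Set $\RLF[y]=\RLBWT[\tau^{-1}(k)].c$. This is correct because every position of $\outputint[k]$ is the $\LF$-image of a position in run $\tau^{-1}(k)$. Equivalently, $\FL$ maps it back into that run.
\item The cost is $O(|P'|+r)=O(r)$ time and space, using only $Q$ and $\tau^{-1}$, which you already build in your first step.
\end{itemize}

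You could have caught the misprint by checking the displayed formula against how $\RLF$ is used in the very next sentence.
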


We do not have access to $\ISA$ nor rank queries, so we need another method to find the correct position and interval in $\MOVE{FL}$ for text comparison. Let $s_{i}$ be initialized to $s_{i}=\ISA[0]=\FL(0)$ and $t_{i}$ be its rank in $\MOVE{FL}$. Then we can use these values to maintain $s_i=\ISA[i+\ell]$ while accessing $T[i+\ell]=\RLF[t_i]$ by setting $(s_i,t_i)=\MOVE{FL}.move(s_i,t_i)$ each time we extend $\ell$ forward, which we do at most $n$ times. Since our move structure is balanced, overall this takes $O(n)$-time. However, the value $j$ changes at each iteration and thus we cannot maintain coordinates in $\MOVE{FL}$ for $T[j+\ell]$ simply using consecutive move queries; indeed, this is where length capping cannot be applied to the algorithm~\cite{brown2026boundingaveragestructurequery}.

To support random access into the text when $j$ is updated, consider sampling every $\frac{n}{r}$ $\ISA$ values such that 
$$
\ISA_{n/r}
= \{\ISA[k \cdot \lceil n/r \rceil]~|~k \in [0,r)]\}.
$$
Where $P'$ is with respect to $\MOVE{FL}$, let $\ISA_{\FL}=\{(j, P'.\rank(j))~|~j \in \ISA_{n/r}\}$ be such that $\ISA_{\FL}[k]$ stores the pair $\ISA_{n/r}[k]$ and the rank of its predecessor in $\MOVE{FL}$. Then if we need to access $T[j+\ell]$ we can lookup its predecessor sample location $k'=(j+\ell)/\ceil{\frac{n}{r}}$ using integer division, lookup the coordinates $(s_j,t_j)=\ISA_{FL}[k']$ in $\MOVE{FL}$ and then compute $\Delta=\left((j+\ell) \mod \ceil{\frac{n}{r}}\right)$ move queries to find the correct $(s_j',t_j')$ corresponding to $s_j'=\ISA[j+\ell]$. This requires at most $\ceil{\frac{n}{r}}$ steps due to the sampling scheme, and further, since we only require random access $r$ times when we update $j$ at irreducible \PLCP positions, we require at most $\ceil{\frac{n}{r}}\cdot r \in O(n)$ steps to simulate random access across the algorithm. Again, thanks to move structure balancing, overall this is $O(n)$-time. We can build $\ISA_{FL}$ itself by using $\MOVE{FL}$ to enumerate the suffix array in text order and sample in $O(n)$-time~\cite{sanaullah2026rlbwt}.

\begin{lemma}\label{lem:isa_samples}
    Given $\MOVE{FL}$ with corresponding balanced input intervals $P'$, then we can compute $\ISA_{n/r} = \{\ISA[i \cdot \lceil n/r \rceil]~|~i \in [0,r)]\}$ and 
    $\ISA_{\FL}=\{(j, P'.\rank(j))~|~j \in \ISA_{n/r}\}$ in $O(n)$-time and $O(r)$-space~\cite{sanaullah2026rlbwt}.
\end{lemma}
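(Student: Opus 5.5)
The plan is to produce both sampled arrays by walking the text from left to right with the balanced structure $\MOVE{FL}$. By definition of $\FL$, $\SA[\FL(i)] = (\SA[i]+1) \bmod n$. So if we start at the row $s$ with $\SA[s]=0$, that is $s=\ISA[0]$, and apply $\FL$ repeatedly, after $x$ steps we are at row $\ISA[x]$. This is the same traversal used in Section~\ref{sec:modified_phi} to simulate access to $T$.

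\textbf{Starting coordinates.} We first need $\ISA[0]$ together with its interval rank in $\MOVE{FL}$.
\begin{itemize}
\item Since $T$ ends with $\$$, the suffix at position $n-1$ has lexicographic rank $0$, so $\SA[0]=n-1$.
\item Applying $\FL$ to row $0$ therefore gives $\ISA[0]=\FL(0)$, matching the initialization in Section~\ref{sec:modified_phi}.
\item Row $0$ lies in the first balanced input interval, so $P'.\rank(0)=0$.
\end{itemize}
One call $\MOVE{FL}.move(0,0)$ thus returns the pair $(\ISA[0],\,P'.\rank(\ISA[0]))$ in $O(1)$ time, with no predecessor search.

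\textbf{Main loop.} We keep a pair $(s,t)$ with $s=\ISA[x]$ and $t=P'.\rank(s)$, and a counter $x$ running from $0$ to $n-1$.
\begin{itemize}
\item Whenever $x$ is a multiple of $\lceil n/r\rceil$, say $x=k\lceil n/r\rceil$ with $k<r$, we write $\ISA_{n/r}[k]=s$ and $\ISA_{\FL}[k]=(s,t)$.
\item We then advance with $(s,t)\leftarrow \MOVE{FL}.move(s,t)$. By Definition~\ref{def:move_notation}, this yields $\FL(s)=\ISA[x+1]$ and its interval rank.
\end{itemize}

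\textbf{Correctness.} Each stored pair is exactly $(\ISA[k\lceil n/r\rceil],\,P'.\rank(\cdot))$. This follows by induction on $x$: the move query maintains $s=\ISA[x]$ by the defining identity of $\FL$, and maintains $t=P'.\rank(s)$ because the query returns the new rank.

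\textbf{Complexity.}
\begin{itemize}
\item Time: the walk makes $n$ move queries. Each costs $O(1)$ because $\MOVE{FL}$ is balanced (Theorem~\ref{thm:nt}), so the total is $O(n)$.
\item Space: besides $\MOVE{FL}$, which uses $O(r)$ words, we store at most $r$ samples and $O(1)$ counters. The working space is therefore $O(r)$.
\end{itemize}

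The one point needing care is the indexing of samples. The number of sampled positions is $\lceil n/\lceil n/r\rceil\rceil\le r$, so the indices $k$ fit within $[0,r)$. We also have to make sure the walk is stopped at $x=n-1$ before it wraps around to row $\ISA[0]$ again. I expect this bookkeeping to be the main, though minor, obstacle; everything else follows directly from the move query guarantees.
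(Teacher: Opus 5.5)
Your proposal is correct and takes the same route as the paper. The paper builds $\ISA_{\FL}$ by using $\MOVE{FL}$ to enumerate the suffix array in text order from $\ISA[0]=\FL(0)$, sampling every $\lceil n/r\rceil$ steps; this is exactly your walk of $n$ constant-time move queries that carry the interval rank alongside the position. Your check that at most $r$ sample positions fall in $[0,n)$ also tidies the paper's loose indexing $k\in[0,r)$, under which $k\lceil n/r\rceil$ can exceed $n-1$.
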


\subsubsection{Algorithm Complexity}

Once $\IPLCP$ is constructed, one can use a move structure for $\phi$ (or, alternatively, $\phi^{-1})$ along with Corollary~\ref{cor:irreducible} to fill the values in $\LCP$ order from $i=n-1\to 0$~\cite{sanaullah2026rlbwt}. Notice that for the $\FL$ move structure, we required worst case guarantees provided from balancing. The $O(r\log\ r)$ factor is inherited from balancing in Lemma~\ref{thm:fl_move_og} with~\Cref{thm:oldbalance}.

\begin{theorem}[\Cref{thm:irreducible_og} of main text, Sanaullah et al.~\cite{sanaullah2026rlbwt}]\label{thm:irreducible_og_app}
    Given $\RLBWT[0..r-1]$ corresponding to $\BWT[0..n-1]$, we can compute the irreducible \PLCP array $\IPLCP[0..r-1]$ in $O(n + r \log r)$-time and $O(r)$-space.
\end{theorem}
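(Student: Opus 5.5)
The plan is to assemble the modified $\phi$ algorithm of \Cref{sec:modified_phi} from the lemmas already stated in this appendix, and to charge each phase separately. The only phase exceeding $O(n)$ should be the balancing step.

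\textbf{Preprocessing.} First I obtain the irreducible positions $I$ in $O(n)$-time and $O(r)$-space by Lemma~\ref{lem:irr_pos}, and the mapped values $\phi^+$ within the same bounds by Lemma~\ref{lem:phi_heads}. Next I build $\MOVE{FL}$ in $O(r\log r)$-time and $O(r)$-space by Theorem~\ref{lem:fl_move_og_app}. This is the sole source of the $r\log r$ term. From it I derive $\RLF$ in $O(r)$-time by Lemma~\ref{lem:rlf}, and the sample array $\ISA_{\FL}$ in $O(n)$-time and $O(r)$-space by Lemma~\ref{lem:isa_samples}. By Theorem~\ref{thm:nt}, the balanced structure has $O(r)$ intervals, so every stored array has size $O(r)$.

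\textbf{Main loop.} I run the loop over $k=0\to r-1$ with $i=I[k]$, $j=\phi^+[k]$, and $\ell$ updated by $\ell\leftarrow\max(0,\ell-(i-I[k-1]))$. Correctness is inherited from the $\phi$ algorithm of K\"arkk\"ainen et al.: skipping over reducible positions decreases the lcp by exactly one per step, and Corollary~\ref{cor:irreducible} gives the valid lower bound. Symbol $T[x]$ is read as $\RLF[P'.\rank(\ISA[x])]$, maintained in two ways:
\begin{itemize}
\item For the $i$-side, I keep the pair $(s_i,t_i)$ and advance it by one move query per unit increase of $i+\ell$. Because $i+\ell$ is nondecreasing overall, this costs at most $n$ moves. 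When $i$ jumps and $\ell$ shrinks so that $i+\ell$ stays put or grows, I advance by the difference; the total is still bounded by $n$.
\item For the $j$-side, at each $k$ I reinitialize from the sample $\ISA_{\FL}[\lfloor (j+\ell)/\lceil n/r\rceil\rfloor]$ and perform at most $\lceil n/r\rceil$ moves, for $O(n)$ over all $r$ iterations. Each subsequent match advances both sides by one move.
\end{itemize}

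\textbf{Time accounting.} Equal-symbol comparisons total at most $n$, because $i+\ell$ only increases. Mismatches total at most $r$, one per iteration. Since each move query takes $O(1)$ in a balanced structure, the loop runs in $O(n)$.

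\textbf{Main obstacle.} The delicate point is the amortization on the $i$-side across skipped reducible positions. I would argue it by showing that $i+\ell$ never decreases between consecutive irreducible positions, which follows because $\PLCP[i]+i$ is nondecreasing. That monotonicity also justifies initializing $s_i=\ISA[0]$ and walking forward. Summing the phases gives $O(n+r\log r)$ time and $O(r)$ space.
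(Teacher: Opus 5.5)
Your proposal is correct and follows the paper's proof. It uses the same assembly of \Cref{lem:irr_pos,lem:phi_heads}, treats the $O(r\log r)$-time construction of $\MOVE{FL}$ as the sole super-linear term, builds $\RLF$ and $\ISA_{\FL}$ via \Cref{lem:rlf,lem:isa_samples}, and finishes with the $O(n)$-time modified $\phi$ loop. The only difference is that you make explicit the amortization of the $i$-side and $j$-side move queries, which the paper's proof leaves to the surrounding discussion in \Cref{sec:modified_phi}.
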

\begin{proof}
    We construct $I$ and $\phi^+$ in $O(n)$-time and $O(r)$-space by \Cref{lem:irr_pos,lem:phi_heads}. We then construct $\MOVE{FL}$ in $O(r \log r)$-time and $O(r)$-space per~\Cref{thm:fl_move_og}. Finally, we construct in $O(r)$-space: $\RLF$ in $O(r)$-time using~\Cref{lem:rlf}, and $\ISA_{\FL}$ in $O(n)$-time using~\Cref{lem:isa_samples}. These are used to perform the modification of K\"arkk\"ainen, Manzini, and Puglisi's~\cite{karkkainen2009permuted} $\phi$ algorithm in $O(n)$-time. Overall, this results in $O(n+ r \log r)$-time and $O(r)$-space.
\end{proof}

\begin{theorem}[\Cref{thm:lcp_og} of main text, Sanaullah et al.~\cite{sanaullah2026rlbwt}]\label{thm:lcp_og_app}
    Given $\RLBWT[0..r-1]$ corresponding to $\BWT[0..n-1]$, we can construct $\LCP[0..n-1]$ in $O(n + r\log r)$-time and $O(r)$ working space.
\end{theorem}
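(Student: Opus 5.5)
The proof has two stages: first compute $\IPLCP[0..r-1]$ by \Cref{thm:irreducible_og_app}, then expand these $r$ irreducible values to the full $\LCP$ array using a $\phi$ (or $\phi^{-1}$) move structure together with \Cref{cor:irreducible}. The first stage costs $O(n + r\log r)$ time and $O(r)$ space, so the whole task is to show that the second stage runs in $O(n + r\log r)$ time (in fact $O(n)$ beyond building the move structure) and $O(r)$ working space. The output $\LCP$ is written sequentially and is not counted in the working space.

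\textbf{Building the move structure.} Construct $\MOVE{\phi}$ by \Cref{thm:phi_move_og} in $O(n + r\log r)$ time and $O(r)$ space. Keep the balanced input-interval starts $P'$. Because balancing only adds boundaries, each original $\phi$ interval start, which is an irreducible position by the correspondence stated before \Cref{cor:irreducible}, is still a boundary in $P'$. For each balanced interval $j'$, store the original irreducible interval it lies in together with its $\IPLCP$ value. This takes one $O(r)$ merge of $P'$ against $I$ from \Cref{lem:irr_pos}. With this table, a position $i$ with known interval rank $j'$ gets $\PLCP[i]=\IPLCP[k]-(i-I[k])$ in $O(1)$ time by \Cref{cor:irreducible}, where $k$ is the stored original index.

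\textbf{Traversal.} Set $i=\SA[n-1]$ and find its balanced interval rank. The text position $\SA[n-1]$ is available as a boundary value when the $\phi$ inputs are built in \Cref{lem:phi_input}, and its rank can be located during construction in $O(r)$ time. Then run $n$ steps:
\begin{itemize}[noitemsep, topsep=2pt, leftmargin=1.5em]
    \item output $\LCP[x]=\PLCP[i]$ for $x=n-1,\dots,0$;
    \item update $(i,j')\gets\MOVE{\phi}.move(i,j')$, which by the definition of $\phi$ moves to $\SA[x-1]$.
\end{itemize}
Each step is $O(1)$ because the structure is balanced, so the traversal takes $O(n)$ time. Summing both stages gives $O(n+r\log r)$ time and $O(r)$ working space.

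\textbf{Main obstacle.} The subtle step is the interval bookkeeping. After balancing splits intervals, \Cref{cor:irreducible} must still be applied with the original irreducible start rather than the balanced start. I would handle this exactly through the $O(r)$ precomputed table mapping balanced intervals to original ones. The other small detail, obtaining the starting position $\SA[n-1]$ and its rank, I expect to fall out of the $\phi$ input construction.
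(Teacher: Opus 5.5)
Your proof is correct, but your second stage differs from the paper's. After computing $\IPLCP$ via \Cref{thm:irreducible_og}, the paper builds a \emph{length-capped} move structure, an external result of Brown and Langmead, in $O(n)$ time and $O(r)$ space, and streams $\LCP[i]=\PLCP[\SA[i]]$ through it. Its enumeration stage is therefore already linear, and the only $r\log r$ term comes from $\MOVE{FL}$ inside the $\IPLCP$ computation.

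You instead build a balanced $\MOVE{\phi}$ with \Cref{thm:phi_move_og}. This incurs a second $O(n+r\log r)$, which is absorbed into the stated bound.

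\textbf{What your route buys.} It is self-contained, relying only on results stated in the paper. It also makes explicit two details the paper leaves implicit. The first is mapping each split interval back to its original irreducible start before applying \Cref{cor:irreducible}; the same issue arises for length capping, which also splits intervals. The second is locating the starting point. This is indeed immediate: $\SA[0]=n-1$ is the last interval start, and one move from it yields $\SA[n-1]$ together with its rank.

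\textbf{What the paper's route buys.} It pays off in the later optimal-time result. Since the paper's enumeration stage is already $O(n)$, only \Cref{thm:irreducible_optimal} needs to be substituted. With your version you would also have to swap in the linear-time $\MOVE{\phi}$ construction of \Cref{sec:optimal_lcp_rlbwt}. That construction is available in the paper, so your argument still extends.
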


\begin{proof}
    We apply the result of~\Cref{thm:irreducible_og} in $O(n + r \log r)$-time and $O(r)$-space. Then, we build a length capped move structure for $\phi^{-1}$ in $O(n)$-time and $O(r)$-space~\cite{brown2026boundingaveragestructurequery}. From~\Cref{cor:irreducible}, this allows us enumerate from $i=n-1\to 0$ each value $\LCP[i]=\PLCP[\SA[i]]$ in $O(n)$ total time and $O(r)$ additional working space. Overall, this gives an $O(n + r \log r)$-time algorithm using $O(r)$ working space.
\end{proof}

\section*{Acknowledgments.}
We would like to thank insightful reviewer feedback which has contributed to improving this work.

\bibliographystyle{siamplain}
\bibliography{refs}
\end{document}